\newcommand{\cd}{\cdot}
\newcommand{\ra}{\rightarrow}
\newcommand{\pr}{\prime}
\newcommand{\te}{\theta}
\newcommand{\id}{\mathrm{id}}
\newcommand{\C}{\mathbb{C}}
\newcommand{\R}{\mathbb{R}}
\newcommand{\Z}{\mathbb{Z}}
\newcommand{\abs}[1]{\left\lvert #1 \right\rvert}
\newcommand{\tld}[1]{\widetilde{#1}}
\newcommand{\lbar}[1]{\overline{#1}}
\DeclareMathOperator{\E}{E}
\DeclareMathOperator{\pb}{P}
\DeclareMathOperator{\U}{U}
\numberwithin{equation}{section}
\numberwithin{figure}{section}
\newtheorem{theorem}{Theorem}[section]
\newtheorem{lemma}{Lemma}[section]
\newtheorem{proposition}{Proposition}[section]
\theoremstyle{definition}
\newtheorem{definition}{Definition}[section]
\theoremstyle{remark}
\newtheorem*{note}{Note}
\begin{document}

\title{Quantum Detection of Recurrent Dynamics}
\date{\today}

\author{Michael H. Freedman}
\address{\hspace{-\parindent}Michael H. Freedman}
\email{freedmanm@google.com, mfreedman@cmsa.fas.harvard.edu}

\begin{abstract}
	Quantum dynamics that explore an unexpectedly small fraction of Hilbert space is inherently interesting. Integrable systems, quantum scars, MBL, hidden tensor structures, and systems with gauge symmetries are examples. Beyond dimension and volume, spectral features such as an $O(1)$-density of periodic eigenvalues, or other spectral features, can also imply observable recurrence. Low volume dynamics will recur near its initial state $| \psi_0\rangle$ more rapidly, i.e.\ $\lVert\U^k | \psi_0\rangle - | \psi_0\rangle \rVert < \epsilon$, is more likely to occur for modest values of $k$, when the (forward) orbit $\operatorname{closure}(\{\U^k\}_{k=1,2,\dots})$ is of relatively low dimension $d$ and relatively small $d$-volume. We describe simple quantum algorithms to detect such approximate recurrence. Applications include detection of certain cases of hidden tensor factorizations $\U \cong V^\dagger(\U_1\otimes \cdots \otimes \U_n)V$. ``Hidden'' refers to an unknown conjugation, e.g.\ $\U_1 \otimes \cdots \otimes \U_v \ra V^\dagger(\U_1 \otimes \cdots \otimes \U_n)V$, which will obscure the low-volume nature of the dynamics. Hidden tensor structures have been observed to emerge both in a high energy context of operator-level spontaneous symmetry breaking \cites{fsz21,fsz21b,fsz21c,szbf23}, and at the opposite end of the intellectual world in linguistics \cites{smolensky90,mlds19}. We collect some observations on the computational difficulty of locating these structures and detecting related spectral information. A technical result, Appendix \ref{sec:qma-complete}, is that the language describing unitary circuits with no spectral gap (NUSG) around 1 is QMA-complete. Appendix \ref{sec:bridge} connects the Kolmogorov-Arnold representation theorem to hidden tensor structures.
\end{abstract}

\maketitle

\section{Introduction}

Our central theme is to identify structures, tensor and spectral, of unitary dynamics that will be detectable with NISQ-era quantum computers. Parallel structures arise in the wholly classical context of linguistics and machine learning (references in text), e.g.\ $\{\text{eigenvalues}\} \leftrightarrow \{\text{singular}$ $\text{values}\}$. We note these mathematical patterns and investigate one in an appendix (\ref{sec:bridge}).

Dynamical systems, quite generally, have associated dimensions and invariant measures. These can be defined abstractly \cite{shalizi06} or extrinsically after a Takens embedding \cite{takens06} places the system inside a Euclidean space $\R^N$. This holds for both continuous and discrete time dynamics. When an orbit explores a low-dimensional or low-volume portion of phase space, this is worth knowing; something special is happening. Examples include KAM tori, strange attractors, and in this note, the forward orbit $\{\U^k | \psi_0 \rangle\}_{k=1,2,\dots}$ for certain well-aligned pairs $(\U,| \psi_0 \rangle)$. Low dimensionality/volume may reflect constraints, conserved quantities, or the lack of efficient couplings to the environment as in many-body localization (MBL) \cite{fan17}. MBL is mentioned because it highlights that a practical meaning to ``forward orbit'' should be given. MBL states are long-lived meta-stable states, but eventually lose their structure, so they must be studied with a timescale in mind. Since our interest here is application to mid-term quantum computers, where errors will build up, we must consider some bound on forward orbit. Literally taking $k \ra \infty$ would, in the presence of noise, always lead to ergodicity. It is our goal here to highlight regimes where non-ergodic dynamics can be detected (perhaps for $k \leq 10^6$). Thus, in what follows, the orbit $\{\U^k | \psi_0 \rangle\}_{k=1,2,\dots}$ should have some cutoff; correspondingly, the notion of Lie group closure of the orbit also needs some technical adjustment.\footnote{E.g.\ Lie group closure of the infinite orbit may be replaced with a Lie subgroup of Gromov-Hausdorff distance $< \epsilon$ from the finite orbit. We will not enter into such details here.}

We study $N \times N$ unitaries $\U$ acting on $\mathcal{H}^N$, where often a qubit decomposition expresses $N = 2^n$. $G := \operatorname{closure}(\{\U^k\}_{k=1,2,\dots})$ is an abelian Lie sub-group of $\U(N)$ with a single (dense) generator, $\U$. This forces $G \cong \Z_j \times T^d$ to be the product of a finite cyclic group $\Z_j := \Z / j\Z$, and a $d$-torus for $0 \leq d \leq N$. We are interested in the case when $d << N$, $j = O(1)$ and $d$-volume$(T^d) = O((2\pi)^d)$, the volume of the standard product tori. Our object of study is not $G$ directly but the orbit $\mathcal{O} \subset \mathcal{H}^N$ as it acts on $\mathcal{H}^N$ by moving a standard $\Z$-basis vector, say $\vert \vec{0} \rangle$, about the closed orbit $\lbar{\mathcal{O}} := G \vert \vec{0}\rangle$.

We will explore a toy example where large $S \subset \mathcal{H}$ arises from a hidden tensor decomposition into factors enjoying a special spectral structure.
\begin{equation}
	\U = V(\U_1 \otimes \cdots \otimes \U_j)V^\dagger,\quad \U_i \text{ an } n_i \times n_i \text{ unitary},\ \prod_{i=1}^j n_i = n \tag{$\ast$}.
\end{equation}
Conjugating $V$ cryptographically hides the special tensor structure of the rhs. Our algorithm will, however, still respond to it. The hidden tensor structure alone, without an additional spectral assumption, does speed up recurrence, but not enough to put the problem of detecting it in BQP.

Note that if $V = \id$, then neither $\U$ (nor its powers) will entangle all qubits, so entropy measures readily identify ($\ast$). With $V \neq \id$ the suggested algorithms will play a role and could also have relevance in other quantum mechanical contexts where small-volume dynamics is key: integrable systems, many-body localization (MBL), scars, and many-body scars.

Emergent tensor decompositions of this type have recently been found to arise from a symmetry breaking process in which an initial Hilbert space, without a tensor structure, finds one in order to minimize certain loss functions \cites{fsz21,fsz21b,fsz21c,szbf23}. Pursuing, now, analogies outside quantum mechanics, tree-like tensor structures have been proposed to model the emergent structure of natural language \cites{smolensky90,mlds19}. These papers introduce tensor-paired concepts, called \emph{roles} and \emph{fillers}. For a simple example, consider verbs and nouns: Jill threw the ball, diagrammed as $(J \otimes T) \otimes B$. A syntactic example is reproduced in Section \ref{sec:comp-dif}. The detection within trained LLMs of emergent tensor structures---ones not anticipated by grammarians, and likely less interpretable, is the subject of current research \cite{fbt24}. This latter work highlights substantial numerical obstacles to detecting hidden tensor structures (in that work the context is general, not necessarily unitary, matrices).

I would like to thank Alexei Kitaev, Yuri Lenski, Jarrod McClean, and Vadim Smelyanskiy for stimulating conversations on these topics. I also thank the anonymous referee for helpful advice and literature references.

\subsection{Outline and Preview}
Section \ref{sec:rec-alg} (Recurrence Algorithm) details the quantum algorithm used to detect recurrence of $\mathcal{O}$ when accompanied by the aforementioned special feature: an $O(1)$ fraction of low period eigenvalues. Evolution of a system with a flat band, e.g.\ the integral quantum Hall effect (IQHE), will have a large invariant subspace $S \subset \mathcal{H}$. This corresponds to Hilbert space directions (spanning $S$) which do not participate in the Hamiltonian, and will make recurrence more likely.

In Section \ref{sec:comp-dif} (Complexity and Circuit Obfuscation), we consider the computational difficulty of detecting the structures under study: hidden tensor structures and the prevalence of near-unity (or near-periodic) eigenvalues. We find a related problem hard: the language, no unitary spectral gap (NUSG), whose members construct a quantum circuit with an eigenvalue near 1, is proven to be QMA-complete. (The proof is deferred to Appendix \ref{sec:qma-complete}.) We consider evidence that the related problems of breaking \emph{obfuscation} protocols, intended to hide the presence of difficult-to-determine structures, may be difficult.

Section \ref{sec:hidden-structure} (Hidden Structure in Dynamically Produced Linear Algebra) reviews some examples of emergent tensor structures in disparate contexts---high energy, linguistics, and machine learning---in the hope of establishing useful analogies and common features.

Appendix \ref{sec:qma-complete} (No Unitary Spectral Gap is QMA-Complete) proves QMA-completeness of an elementary spectral property of a unitary. This is done to illustrate that without additional context, the spectral properties upon which detectable recurrence rests are difficult to ascertain.

Appendix \ref{sec:bridge} (A Curious Bridge Between Tensor Factoring and the Kolmogorov-Arnold Representation Theorem) develops a connection between the algebra of tensor factorization and the Kolmogorov-Arnold representation theorem. The obstruction to low-dimensional representations of multivariate functions ``Sternfeld arrays,'' can likewise be viewed as obstructions to constructing tensor factorizations.

\section{Recurrence Algorithm}\label{sec:rec-alg}

We suppose $\U: \mathcal{H}^N \ra \mathcal{H}^N$ and its dyadic powers $\U^{2^j}$ are available to us, for example, as when $\U$ encodes modular exponentiation in Shor's algorithm. We use a number register to represent dyadically any $0 \leq k \leq 2^j$ (given a size cutoff for $k$ as discussed in the introduction). Let $|\psi_0 \rangle = | \vec{0} \rangle$ be a basis vector of $\mathcal{H}^N$. This notation reflects the special case when $\mathcal{H}^N \cong (\C^2)^{\otimes n}$ has a fixed qubit decomposition. With the number register written on top and the $\mathcal{H}^N$ state line at the bottom, build the following circuit:

\begin{figure}[ht]
	\centering
	\begin{tikzpicture}[scale=1.3]
		\draw [decorate,decoration={brace,amplitude=5pt}] (-0.8,0.5) -- (-0.8,3.7);
		\node at (-1.7,2.3) {number};
		\node at (-1.7,1.9) {register};
		\node at (-1.7,0.2) {state};
		\node at (-1.7,-0.2) {register};
		
		\node at (-0.6,0) {$|0,...,0\rangle$};
		\node at (-0.4,0.75) {$|0\rangle$};
		\draw (0,0) -- (8,0);
		\draw (0,0.75) -- (1.5,0.75);
		\draw (1.5,0.5) -- (1.5,1) -- (2,1) -- (2,0.5) -- cycle;
		\node at (1.75,0.75) {H};
		\draw (2,0.75) -- (6,0.75);
		\draw (2.75,0.75) -- (2.75,0);
		\draw[fill=black] (2.75,0.75) circle (0.2ex);
		\node at (2.75,0) {$\times$};
		\node at (2.75,-0.4) {$\operatorname{CU}$};
	
		\node at (-0.4,1.5) {$|0\rangle$};
		\draw (0,1.5) -- (1.5,1.5);
		\draw (1.5,1.25) -- (1.5,1.75) -- (2,1.75) -- (2,1.25) -- cycle;
		\node at (1.75,1.5) {H};
		\draw (2,1.5) -- (6,1.5);
		\draw (3.5,1.5) -- (3.5,0);
		\draw[fill=black] (3.5,1.5) circle (0.2ex);
		\node at (3.5,0) {$\times$};
		\node at (3.5,-0.4) {$\operatorname{CU}^2$};
	
		\node at (-0.4,2.25) {$|0\rangle$};
		\draw (0,2.25) -- (1.5,2.25);
		\draw (1.5,2) -- (1.5,2.5) -- (2,2.5) -- (2,2) -- cycle;
		\node at (1.75,2.25) {H};
		\draw (2,2.25) -- (6,2.25);
		\draw (4.25,2.25) -- (4.25,0);
		\draw[fill=black] (4.25,2.25) circle (0.2ex);
		\node at (4.25,0) {$\times$};
		\node at (4.25,-0.4) {$\operatorname{CU}^4$};
	
		\node at (1.75,3) {$\vdots$};
	
		\node at (-0.4,3.5) {$|0\rangle$};
		\draw (0,3.5) -- (1.5,3.5);
		\draw (1.5,3.25) -- (1.5,3.75) -- (2,3.75) -- (2,3.25) -- cycle;
		\node at (1.75,3.5) {H};
		\draw (2,3.5) -- (6,3.5);
		\draw (5.3,3.5) -- (5.3,0);
		\draw[fill=black] (5.3,3.5) circle (0.2ex);
		\node at (5.3,0) {$\times$};
		\node at (5.5,-0.35) {$\operatorname{CU}^{2^{j-1}}$};

		\node at (4.8,-0.45) {$\cdots$};
	
		\draw[fill=white] (7,0.2) circle (0.4);
		\draw (7,0.2) -- (7.28,0.48);
		\draw (6.5,-0.2) -- (7.5,-0.2) -- (7.5,-0.5) -- (6.5,-0.5) -- cycle;
		\node at (7,1.2) {measure in};
		\node at (7,0.8) {$\Z$-basis};
	\end{tikzpicture}
	\caption{}\label{fig:number-register}
\end{figure}

That is, apply Hadamard (H) to each qubit of the number register, then apply controlled $\U^{2^i}$ from the $i$th number qubit to the state register. Finally, measure the state register, with particular attention to $\vert \vec{0} \rangle$ recurring as the measurement outcome. Note that this circuit is similar to the standard one for quantum phase estimation \cite{ksv02}, but differs in two ways. First, the technically demanding $(\text{Quantum Fourier Transform})^{-1}$, $\text{QFT}^{-1}$, is omitted. Second, we measure the state register, \emph{not} the number register.

By itself, the circuit in Figure \ref{fig:number-register} yields exactly the same probability of detecting (measuring $|\vec{0}\rangle$) as the more classical process of setting the control bits with fair coin flips, instead of Hadamard's. However, generating the superposition is advantageous if one possesses sufficient quantum computing resources to apply (a variant of) amplitude amplification \emph{before} the measurement step. This will quadratically enhance signal detection (see section \ref{sec:amplitude}). If even more quantum resources are available, after measuring the state vector, one may still apply $\text{QFT}^{-1}$ to the number register. This additional interference step harnesses the quantum phase estimation algorithm to assess the approximate period of any known approximate eigenvector.

\subsection{The case where \texorpdfstring{$\U$}{U} is Haar random}\label{sec:haar-random}
In this case, recurrence to $\vert \vec{0} \rangle$ under measurement is doubly exponentially rare. A truly Haar-random unitary must be regarded as emerging from a ``black box'' since it cannot come from a circuit of polynomial depth. For these, the recurrence algorithm cannot actually be implemented. However, to the extent that it is difficult to distinguish truly Haar-random from the output of a shallow circuit, our analysis of the Haar-random case applies.

Let $\U$ have eigenbasis $\{|\psi_1\rangle, \dots, |\psi_{2^n} \rangle$, with corresponding eigenvalues $\lambda_i$, $\U|\psi_i\rangle = \lambda_i | \psi_i \rangle$, and write $\lambda_i = e^{2\pi i \te_i}$. In the (unrelated) $\Z$-basis, writing $\mathcal{H}^N \cong \C^N \cong (C^2)^{\otimes n}$, we may express the all $|0\rangle$ basis vector, $| \vec{0} \rangle =: | \psi_0 \rangle$ as $| \psi_0 \rangle = \sum_{i=1}^{2^n} a_i \lambda_i$. With this notation, the overlap of $|\psi_0 \rangle := | \vec{0} \rangle$ with $\U^k|\psi_0\rangle$ is given by:
\begin{equation}\label{eq:haar-random}
	\langle \psi_0 |U^k| \psi_0 \rangle = \sum_{i=1}^{2^n} \abs{a_i}^2 \lambda_i^k.
\end{equation}

The Haar random assumption means the $2^n$-tuple $\vec{a} = (a_1,\dots,a_{2^n})$ is uniformly distributed, i.e.\ homogeneous under the action of the orthogonal group $O(2^n)$) on the unit sphere $S^{2^n-1}$ of $H^N$. The eigenvalues $\{\lambda_1^k,\dots,\lambda_{2^n}^k\}$ are Gaussian Unitary Ensemble (GUE-) distributed, and as $k$ becomes large, $k$ approaches uniform iid over $2^n$ phase circles $\U(1)$.

Thus, line \eqref{eq:haar-random} may be regarded as the formula for the result of a random walk with variable and weakly correlated step sizes $\abs{a_i}^2$. The result is similar to uniform random walks of $2^n$ steps, each of length $\frac{1}{2^n}$, taken in Euclidean $2^n$ space. If the step sizes $\{\abs{a_i}^2\}$ obey a relatively flat distribution, as they will when the $\Z$-basis is (Haar) randomly aligned to the eigenbasis $\{\vert\psi_i\rangle\}$, then for $n$ and $k$ larger than a single digit, the expected absolute value of the walk $\E\left[\abs{\sum_{i=1}^{2^n} \abs{a_i}^2 \lambda_i^k}\right]$ will very nearly approximate the rms value of uniform step sizes.
\begin{equation}\label{eq:rms-steps}
	\E\left[\abs{\sum_{i=1}^{2^n}\abs{a_i}^2 \lambda_i^k}\right] \approx 2^{-\frac{n}{2}},
\end{equation}
the familiar ``$\surd$-law.''

Fluctuations around this expected value are essentially Gaussian (increasingly so as $n$ grows). The probability that $\abs{\sum_{i=1}^{2^n} \abs{a_i}^2 \lambda_i^k}$ exceeds its expectation by a multiplicative factor $z \in \R^+$ is:
\begin{equation}\label{eq:mult-factor}
	\pb \left(\abs{\sum_{i=1}^{2^n} \abs{a_i}^2 \lambda_i^k} \geq z \cd \E\left[\abs{\sum_{i=1}^{2^n} \abs{a_i}^2 \lambda_i^k}\right]\right) \approx E_c(z)
\end{equation}
where $E_c(z)$ is the complementary error function which may be further approximated as
\begin{equation}
	E_c(z) \approx \frac{1}{6}e^{-z^2} + \frac{1}{2} e^{-\frac{4}{3}z^2},
\end{equation}
see \cite{error-function}.

The bottom line is that the Born rule probability to measure $|\vec{0} \rangle$ after $k$ applications of $\U$ to $| \vec{0} \rangle$, for a Haar random $\U$, is doubly exponentially suppressed. To observe recurrence, $z$ itself must be exponential in $k$.

Line \eqref{eq:mult-factor} relied on the Gaussian description of the rotationally-invariant spherical measure on $S^{k-1} \subset \R^k$, denoted as the random variable $X$ concentrated on $S^{k-1}$,
\begin{equation}
	X = (X_1,\dots,X_k) = \left(\frac{G_1}{G}, \dots, \frac{G_k}{G}\right)
\end{equation}
where each $G_i$, $1 \leq i \leq k$, is an independent copy of the standard Gaussian ($\mu = 0$, $\sigma^2 = 1$), and $G := \sqrt{\sum_{i=1}^k G_i^2}$.
\begin{equation}\label{eq:gaussian-approx}
	\begin{split}
		\pb\left(\abs{X_1} > \frac{z}{\sqrt{n}}\right) & = \pb\left(\frac{\abs{G_1}}{G} > \frac{z}{\sqrt{n}}\right) = \pb\left(\abs{G_1} > \frac{G}{\sqrt{n}}z\right) \\
		& \approx \pb(\abs{G_1} > z) = E_c(z),\ z = \sqrt{2}(\text{std devs})
	\end{split}
\end{equation}
where the only approximation in \ref{eq:gaussian-approx} results from replacing the random variable $G$ with its mean $\sqrt{n}$. Around this mean there will be $O(1)$ fluctuations.

\subsection{The case where \texorpdfstring{$\U$}{U} has a hidden tensor decomposition}\label{sec:tensor-decomp}
That is, $\U = V(\U_1 \otimes \cdots \otimes \U_n)V^\dagger$, all $\U_i$, $1 \leq i \leq n$, are $2 \times 2$ unitaries, with no additional structure assumed (still a negative example).

We essentially repeat the previous analysis, but now recognize the highly correlated nature of the eigenvalues $\Lambda_{\vec{j}}$ of $\U$, $\vec{j} = 1,\dots, 2^n$. Let $\U_i$ have eigenvalues $\lambda_{i,0}$ and $\lambda_{i,1}$, both unit complex. Then the eigenvalues of $\U$ are $\{\Lambda_{\vec{j}} := \prod_{i=1}^n \lambda_{i,j(i)},1 \leq i \leq n, j(i) \in \{0,1\}\}$. Writing $\lambda_{i,j} = e^{2\pi i \te_{i,j}}$, and $\Lambda_{\vec{j}}^k = \prod_{i=1}^n e^{ik\te_{i,j(i)}} = e^{i(\sum_{i=1}^n k\te_{i,j(i)})} =: e^{i\te_{\vec{j}}}$.

Now thinking of the $\te_{i,j}$ as iid uniform random variables on $[-\pi,\pi]$, let us consider the event where each of the $2n$ random variables $\te_{i,j}$ lies in $[-\frac{\pi\epsilon}{2},\frac{\pi\epsilon}{2}]$, an event with probability $\epsilon^{2n}$. This condition biases $\lambda_{i,j}$ strongly towards the positive real direction, $\R^+$. Tail events transmit this to a bias of $\Lambda_{\vec{j}}$ also towards $\R^+$, but this bias is exponentially weak for:
\begin{equation}\label{eq:exp-weak-bias}
	\epsilon > (\mathrm{const.}\sqrt{n})^{-1},\ \text{where the const.\ is } O(1) \text{ and } > 1.
\end{equation}
Line \eqref{eq:exp-weak-bias} follows from the random $\surd$-behavior producing $\Lambda_{\vec{j}}$ from $\{\lambda_{i,j(i)}\}$.

Thus, the induced bias towards recurrence is exponentially small, except for a $\left(\frac{1}{\mathrm{const.} \sqrt{n}}\right)^{2n}$ fraction of instances for $\{\U_i\}$, and thus will not be detectable unless $n$ is small enough to be classically simulated.

It is worth understanding why the tensor form just considered, with $2n$ independent parameters describing its eigenvectors, rather than the $2^n$ independent parameters needed to describe a Haar random $\U$, does not manifest robust recurrence for $n$ larger than about $16$. The reason lies in the fact that the number of states in the unit sphere $S^{2^n-1}$ with a fixed angular separation $< \frac{\pi}{2}$ is \emph{doubly} exponential in $n$ \cite{cs13}. Reducing the effective dof by $\log$ only removes \emph{one} of these exponentials. Angular separation $< \frac{\pi}{2} - \mathrm{const.}$ is efficiently detectable through (Born rule) measurement, whereas larger separation, angular separation $< \frac{\pi}{2} - \frac{\mathrm{const.}}{n^p}$, $p > 0$, is not. We note that with post-selection (Post BQP) the difference between single and doubly exponential case can be exploited to detect hidden factorizations; but this is not the world we live in.

\subsection{The case of a hidden tensor decomposition with structure}\label{sec:with-structure}
The additional \emph{structure} we consider is where the tensor factors have an unexpected abundance of eigenvalues of small finite order. For clarity, let us restrict to the case where the finite order eigenvalues are all $\lambda = 1$. This (up to an overall phase) may arise from the evolution of a flat band, IQHE-physics, and other topological contexts. A toy model for examples of spectral degeneracy are controlled rotations, controlled-controlled-rotations, etc.:
\begin{equation}
	C\te = \def\arraystretch{1}
	\begin{tikzpicture}[baseline=(current bounding box.center)]
		\node at (0,0) {$\left(\begin{array}{cccc}
			1 \\
			& 1 & & \text{\LARGE 0} \\
			\text{\LARGE 0} & & 1 \\
			& & & e^{2\pi i \te}
		\end{array}\right),\ CC\te = \left(\begin{array}{cccc}
			1 \\
			& \hphantom{n} & & \text{\LARGE 0} \\
			\text{\LARGE 0} & \hphantom{n} & \\
			& & & e^{2\pi i \te}
		\end{array}\right)$};
		\node at (1.8,0.65) {\footnotesize{1}};
		\node at (2,0.4) {\footnotesize{1}};
		\node at (2.2,0.15) {\footnotesize{1}};
		\node at (2.4,-0.1) {\footnotesize{1}};
		\node at (2.6,-0.35) {\footnotesize{1}};
		\node at (2.8,-0.6) {\footnotesize{1}};
	\end{tikzpicture}
\end{equation}

To take a fixed example to study, let us imagine that our state register consists of 72 qubits. We imagine a NISQ machine not yet capable of running phase estimation algorithms (the $\text{QFT}^{-1}$ being a hurdle), but capable of running the recurrence algorithm, representing numbers of a few hundred thousand in the number register.

We consider the problem of distinguishing a Haar random $\U$ from:
\[
	U^\pr = V(\U_1 \otimes \cdots \otimes \U_{24})V^\dagger,\ V \text{ Haar random},
\]
and each $\U_i$, $i = 1,\dots,24$ acts on its own 3 qubits as a $CC\te_i$, $\te_i \in [0,2\pi]$. For the $\te_i$ random, the fraction of the $2^{72}$ eigenvalues of $\U^\pr$ which are 1 is about $4\%$.
\begin{equation}
	\operatorname{frac}_1(\U^\pr) = \left(\frac{7}{8}\right)^{24} \approx 0.040569.
\end{equation}

Writing $\{\lambda_i, 1 \leq i \leq 2^{72}\}$ for the eigenvalues of $\U$ and referring back to \eqref{eq:haar-random}, we can replace \eqref{eq:rms-steps} with a revised expectation based on this fraction $\approx 4\%$. For this analysis, we assume no errors. The $\lambda_i^k$ not obviously 1 are uniformly distributed over the phase circle.
\begin{equation}
	\E\Big[\big\lvert \langle \vec{0} | \U^{\pr k} | \vec{0} \rangle\big\rvert\Big] = \E\left[\abs{\sum_{i=1}^{2^k} \abs{a_i}^2 \lambda_i^k}\right] = \text{bias-term} + \surd\text{-term} \approx 0.0405 \pm 0.9594\left(\frac{1}{\sqrt{2^{72}}}\right) \approx 0.0405.
\end{equation}
The $\surd$-term is entirely negligible.

The Born probability of $\Z$-basis measurement of $U^{\pr k} \langle \vec{0}|$ returning $\langle \vec{0}| \approx (0.0405)^2 \approx 0.001646 \approx (607.59)^{-1}$. If the original $\te_i$ are iid and uniform, this probability is independent of $k$. Consequently, the recurrence algorithm circuit must be run around 600 times to even have a chance of observing the spectral structure built into $\U^\pr$. Running the circuit 6000 times gives a 99.9\% chance of detecting the hidden structure if it is present. Of course, the chances of the measurement outcome being $|\vec{0}\rangle$ in the Haar random case, $\U^\pr$ replaced by $\U$, are 1 in $2^{72}$, vanishingly small.

\subsection{Caveats}
The form of the recurrence algorithm circuit requires that dyadic powers of $\U_i$ can be efficiently described. This is the case for our $U_i = CC\te_i$ example. If one generalizes to cases where powers of $\U_i$ do not have compact descriptions, the circuit must be replaced with brute force iteration, resulting in a length proportional to $k_{\text{max}}$ rather than $\log k_{\text{max}}$.

Up to this point we have made the unrealistic assumption that quantum operations are exact. If instead there is a uniform uncorrelated random error of the form
\begin{equation}
	\E\abs{|\psi\rangle \mathcal{O}_{\text{exact}}(\mathcal{O}_{\text{actual}})^{-1}\langle \psi |} = \epsilon
\end{equation}
for all gates $\mathcal{O}$, then for $\epsilon << \frac{1}{\sqrt{\#}}$, \# the number of gates in the circuit, our analysis will be little affected by including error. For larger $\epsilon$, the analysis will fail. This threshold highlights the importance of the first caveat, i.e.\ using $\U_i$ whose powers compile efficiently and keeping the number of gates proportional to $\log k_{\text{max}}$ rather than $k_{\text{max}}$.

\subsection{Amplitude Amplification}\label{sec:amplitude}
In the preceding example, a bias of $\epsilon$ (4\% in the example) towards +1 (or small period) eigenvalues implies $\approx \epsilon^{-2}$ (about 600) runs for a signal to begin to be detected. If our quantum computer is sufficiently capable to run the appropriate variant of amplitude amplification \cite{bh97} then, up to a log factor, the number of runs required reduces to $O(\epsilon^{-1})$. In the context of section \ref{sec:with-structure} this essentially doubles the number of qubits that can be managed.

Just prior to the measurement step in Figure \ref{fig:number-register} we may write to the state as:
\begin{equation}
	|\Psi\rangle = \frac{1}{\sqrt{2^n}} \sum_{k=1}^{2^n} |k\rangle \otimes \left(\alpha_k | \vec{0} \rangle + \beta_k | \psi_1^k \rangle\right),
\end{equation}
where we have decomposed $\U^k(|\vec{0}\rangle) =: \alpha_k |\vec{0}\rangle + \beta_k | \psi_1^k\rangle$. Let
\begin{equation}
	P := \left(\frac{1}{\sqrt{2^n}} \sum_{k=1}^{2^n} | k \rangle \otimes | \vec{0} \rangle \right) \left(\frac{1}{\sqrt{2^n}} \sum_{k=1}^{2^n} \langle k | \otimes \langle \vec{0} |\right)
\end{equation}
be the projection onto $\Theta := \frac{1}{\sqrt{2^n}} \sum_{k=1}^{2n} |k\rangle \otimes |\vec{0}\rangle$, and define $Q = - S_\Psi S_P$, where
\begin{equation}
	S_\Psi := 1 - 2 |\Psi\rangle\langle\Psi|,\ S_P := 1 - 2P
\end{equation}

Similar to the case of \cite{bh97}, where $P$ is projection onto the span of the ``good,'' i.e.\ marked states, iteration of the unitary $Q$ effects a rotation by the angle $2\te$ in the plane spanned by $\Theta$ and $\Psi$, where $\sin \te$ is the overlap of the initial state $\Psi$ with $\Theta$.

In the large $n$ limit, it follows from \eqref{eq:haar-random} that $\sin \te \approx \te \approx \epsilon$, where $\epsilon$ is the bias towards +1 eigenvalues for $\U$, so $O(\epsilon^{-1})$ iteration of $Q$ suffices to align $\psi$ to be detectable by a $|\vec{0}\rangle$-direction measurement in the state register. Since too many iterations of $Q$ will cause an over-shoot, as with Grover's algorithm, a geometric sequence of integral ``guesses'' for $\epsilon^{-1}$ must be tried. This leads to the aforementioned log factor.

Amplitude amplification may also be applied to the less structured problem of section \ref{sec:tensor-decomp}, but does not remove its exponential scaling, similarly for the doubly exponential scaling in the problem of section \ref{sec:haar-random}.

\section{Complexity and Circuit Obfuscation}\label{sec:comp-dif}
The recurrence algorithm allows a NISQ quantum computer to distinguish some exceptional classes of quantum circuits---those resulting in many periodic eigenvalues, as in our tensor example from section \ref{sec:with-structure}. In the next section, we will give context and motivation for (hidden) tensor product circuits and those enjoying further additional spectral structures. In this section, we ask if this application (from section \ref{sec:with-structure}) of the recurrence algorithm surpasses classical resources, i.e.\ does it constitute an example of ``quantum supremacy?'' To put this question in context, we consider the complexity of some related quantum decision problems.

Obfuscation is a difficult and much-researched question. Its simplest instance involves ``trivial'' quantum circuits, that is, quantum circuits which compute $\id: \C^{2n} \ra \C^{2n}$, the identity. An early result \cite{jwb05} is that verification that a classical description of a quantum circuit specifies (nearly) the identity (up to an overall phase) is QMA-complete. The authors build a circuit $Z$ which approximates the identity iff an input quantum circuit $W$ rejects all its potential witnesses with high probability. This result still leaves open the question: is there an efficient classical probabilistic algorithm for constructing quantum circuits $\{C_i\}$ equal (or near) $\id$ which cannot be classically (efficiently, and with high probability) recognized to have that property, given access to the circuit $C_i$? There are many related notions of obfuscation in both classical and quantum CS; in this paper we say such an algorithm $A$ obfuscates the identity map. The existence of such an $A$ is open.

Extending \cite{jwb05}, we find that the language, we call No Unitary Spectral Gap (NUSG), consisting of classical descriptions of quantum circuits having an eigenvalue near 1, is QMA-complete. A precise statement and proof of this result is presented in Appendix \ref{sec:qma-complete}. This shows that even the easiest promise-provisions regarding unitary spectra are difficult. Of course, a property that is difficult to determine (near $\id$ or no spectral gap) is, by no means, necessarily easy to obfuscate. Appendix \ref{sec:qma-complete} merely establishes difficulty.

Obfuscation of classical algorithms is a topic in classical cryptography. \cite{barak01} provided a no-go theorem to a strong black-box form of the classical question, but the same year the lead author published a paper on practical evasions of the black-box barrier \cite{barak01b}.

With the development of quantum computer science, the BQP-version of obfuscation, both of classical and quantum programs, has been studied \cite{af16}. While extending the black-box no-go theorem, they propose a variety of weaker quantum obfuscation schemes and find applications of these to  ``quantum money'' as one example.

Important work \cite{mah18} introduced a technique later exploited in \cite{bm21} in which it is shown that, in the presence of several plausible cryptographic conjectures, it is at least possible to obfuscate, in the quantum context, what they call the ``null function,'' i.e.\ a function which is exponentially unlikely to return anything except 0 on a random input. Their techniques may be applicable to the open problem of obfuscating\footnote{The existence of such an obfuscation technique is at least consistent with the QMA-completeness result of \cite{jwb05}, and the extension we prove in Appendix \ref{sec:qma-complete}.} $\id: \C^{2^n} \ra \C^{2^n}$.

Let us give some heuristic intuition, without theoretical guarantees, that detecting triviality from a circuit description is difficult classically, with two thought experiments. The first thought experiment: Start with any circuit which is manifestly the identity, say one of the form $\U \U^\dagger$, and then cover the circuit with some random (and secrete) pattern of overlapping blocks. Each block, $B_i$, should contain several gates from the original circuit, and each gate be present in several blocks. Now recompile, using distinct primitives, each block $B_i$ to obtain a new block $B_i^\pr$, perhaps with more gates, which effects approximately the same unitary between input and output. Sequentially replacing blocks yields a circuit near the identity, but it would seem difficult to determine this.\footnote{\cite{mwg21}, pointed out to me by the referee, exploits a similar strategy.}

A second thought experiment is best expressed in Nielsen's continuous model of quantum computation \cite{gdn08}, also see \cite{bfls23}. In this model, a computation is an arc in $\operatorname{SU}(2^n)$ beginning at $\id$. The challenge to the classical observer (of the arc's classical description) is to tell if this arc is (nearly) a loop, i.e.\ also ends back at the identity. Following Nielsen, we should choose a left-invariant Riemannian metric $g_{ij}$ on $\operatorname{su}(2^n)$ which penalizes high body interaction, such as the metric written in \eqref{eq:bs-metric} with constant $\gg 1$. The indices $1 \leq i,j \leq 4-1$ run over all non-trivial Pauli-strings of length $n$, where in each slot a Pauli matrix, 1, $X$, $Y$, or $Z$ resides. The trivial all-1 string is not in the Lie algebra $\operatorname{su}(2^n)$. An attractive and now well-studied choice, the Brown-Susskind metric \cite{bs18}, may be written diagonally on the Lie algebra as:
\begin{equation}\label{eq:bs-metric}
	g_{ij} = e^{\text{const.} w(i)} \delta_{i,j},
\end{equation}
where we have written $g_{ij}$ in the Pauli word basis (up to factors of $i$ needed to make the Pauli strings skew-Hermitian), and the weight $w(i)$ of the $i$th Pauli word is, by definition, the number of $X$, $Y$, or $Z$ entries. $w(i)$ measures the ``many-bodiness'' of an interaction. The intuition is that the metric should impose a growing penalty as the number of bodies increases corresponding to the difficulty of engineering such interactions. As computed in \cite{bfls23}, this metric has directions of high negative curvature. The question of ``loop or arc'' becomes a control problem in the presence of negative curvature (and high dimensions). In accordance with the philosophy promoted by Stephen Wolfram \cite{wolfram02}, almost all systems of this type are ``irreducible''; there will usually be no way to tell what they will do, close or not close, without integrating them. Computational short-cuts do not often appear.

Actually, two (or more) versions of classical recognition of trivial quantum circuits can be considered.
\begin{enumerate}
	\item Consider, in a fixed model, all quantum circuits, with an appropriate measure $\mu$. Pick a norm to define a circuit as ``$\epsilon$-close'' to $\id$ and let $\mu_\epsilon$ be the restricted measure. Question: Can a classical agent efficiently distinguish random samples from $\mu_{\frac{\epsilon}{10}}$ from random samples of $\mu \setminus \mu_\epsilon$ (the constant 10 being arbitrary)?

	\item Is there a classical protocol for constructing circuits in $\operatorname{supp}(\mu_{\frac{\epsilon}{10}})$, or even circuits exactly defining the identity, and a second protocol for producing circuits far from $\id$ so that no classical agent can do better than chance at distinguishing the two outputs?
\end{enumerate}

To summarize, the purpose of this section has been to argue, from the literature, but heuristically, that classical discrimination in either of the two contexts above is likely to be difficult, and obfuscation may well be possible.

\section{Hidden Structure in Dynamically Produced Linear Algebra}\label{sec:hidden-structure}

This section digresses from the quantum world to consider tensor decomposition and spectral problems analogous to those treated in sections \ref{sec:rec-alg} and \ref{sec:comp-dif}, but now in a machine learning (ML) context. We are now pursuing a mathematical analogy; we are not proposing here any quantum advantage over classical ML.

Quite broadly, two themes dominate much of science and engineering: minimize a non-linear loss function and linearize (produce the Hessian) around the solution. This mantra becomes a prescription for constructing ``highly trained'' linear operators. Prominent today are the $Q,K,V$ maps generated in the training of ``attention heads'' of LLMs, and also the weight matrices $W$ linking layers of feed forward neural nets (NN). I will predict that for many decades, we will be occupied by the question: With training, do such matrices acquire special structures? Already widely observed (see \cite{gmrm23}) are $\frac{1}{\text{rank}}$ scaling for the largest eigenvalues of convolutional weight matrices. Any structure found, by definition, permits both compression and understanding of the stored information, so as part of our effort to find what is essential in our optimized (i.e.\ trained) linear maps, we will inevitably look for structure.

Let us return to hidden tensor structures, but now in the context of inner-product spaces (a linear space $\R^n$ or $\C^n$ with a non-singular bilinear or sesquilinear pairing $\langle,\rangle$). Hidden tensor structures may be entirely described as a spectral property, as explained shortly, but except in a few cases this spectral property looks intractable (perhaps NP-complete) to verify.

Before giving the mathematical setup, note that the contexts where people have found tensor structures in emergent linear maps between inner product spaces ranges from linguistics \cites{smolensky90,mlds19} to high energy physics \cites{fsz21,fsz21b,fsz21c,szbf23}.

If one picks a functional of some simplicity, it should not be surprising that when optimized, the solution itself should possess a lot of symmetry. For example, it is now known \cite{ckmd22} that almost any sensible function for assessing ``optimality'' of a packing in $\R^8$ or $\R^{24}$ will settle on the highly structured lattices $E_8$ and Leech, respectively. In much the same way, it was found in the papers cited above that for a variety of functionals (including the Ricci scalar) on left invariant metrics on the special unitary group for a small number of qubits ($\operatorname{SU}(4)$ and $\operatorname{SU}(8)$ were studied extensively), local minima were surprisingly likely to exhibit an emergent tensor structure decomposing their eigenspectra $\in i \ast \operatorname{su}(4) = \{4 \times 4 \text{ traceless Hermitians}\}$ (and on $i \ast \operatorname{su}(8) = \{8 \times 8 \text{ traceless Hermitians}\}$), as two (and three) fold tensor products of $2 \times 2$ Hermitian matrices.

At the other end of the intellectual universe, Smolensky and co-authors have been finding iterated, tree-like tensor decompositions ``roles and fillers'' applicable to language modeling \cites{smolensky90,mlds19}. To give some sense on how syntactic roles and fillers are modeled as vectors in tensor product spaces, we reproduce an example from \cite{mlds19} and \cite{myz13}. They consider the ``equation''
\begin{equation}\label{eq:word-ex}
	\text{I see now} - \text{I see} = \text{you know now} - \text{you know}
\end{equation}
If the roles are placed numbers: $|1\rangle$, $|2\rangle$, $|3\rangle$, and the fillers are the words in the place, then with a left to right word ordering \eqref{eq:word-ex} does become a valid equation in a tensor product space:
\begin{multline}
	\Big(|\text{I}\rangle \otimes |1\rangle + |\text{see}\rangle \otimes |2\rangle + |\text{now}\rangle \otimes |3\rangle\Big) - (|\text{I}\rangle \otimes |1\rangle + |\text{see}\rangle \otimes |2\rangle) \\
	= \Big(|\text{you}\rangle \otimes |1\rangle + |\text{know}\rangle \otimes |2\rangle + |\text{now}\rangle \otimes |3\rangle\Big) - (|\text{you}\rangle \otimes |1\rangle + |\text{know}\rangle \otimes |2\rangle)
\end{multline}
However, if the roles are numbered right to left, \eqref{eq:word-ex} becomes false:
\begin{multline}
	\Big(|\text{I}\rangle \otimes |3\rangle + |\text{see}\rangle \otimes |2\rangle + |\text{now}\rangle \otimes |1\rangle\Big) - (|\text{I}\rangle \otimes |2\rangle + |\text{see}\rangle \otimes |1\rangle) \\
	\neq \Big(|\text{you}\rangle \otimes |3\rangle + |\text{know}\rangle \otimes |2\rangle + |\text{now}\rangle \otimes |1\rangle\Big) - (|\text{you}\rangle \otimes |2\rangle + |\text{know}\rangle \otimes |1\rangle)
\end{multline}

With these two disparate examples in mind (high energy and linguistics), let us formulate what is meant by a linear operator between inner product spaces having a (hidden) tensor decomposition and the spectral implications.

\begin{figure}[ht]
	\centering
	\begin{tikzpicture}[scale=1.2]
		\node at (-3.5,3.8) {Left invariant metrics};
		\node at (-3.5,3.3) {on $\operatorname{su}(2^n)$ minimizing};
		\node at (-6,1.3) {$R_{\text{Ricci}} = -\frac{1}{4}$};
		\node at (-4,1.3) {\includegraphics[scale=0.6]{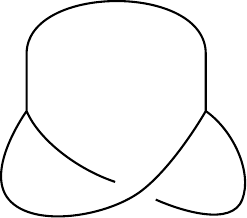}};
		\node at (-2.7,1.3) {$-\frac{1}{2}$};
		\node at (-1.2,1.3) {\includegraphics[scale=0.54]{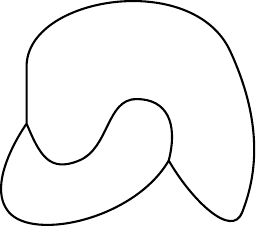}};
	
		\node at (-3.7,-0.5) {$= -\frac{1}{2} c_{ij}^k c_{i^\pr j^\pr}^{k^\pr}g^{ii^\pr}g^{jj^\pr}g_{kk^\pr} - \frac{1}{2}c_{ij}^k c_{ik}^j g^{ii^\pr}$};
	
		\draw (-5.3,-1.7) -- (-5.3,-2) -- (-5.5,-2.2);
		\draw (-5.3,-2) -- (-5.1,-2.2);
		\node at (-3.3,-2) {$= c_{ij}^k$, $[h_i,h_j] = c_{ij}^k h_k$};
		\draw (-5.6,-3) to[out=60,in=120] (-4.6,-3);
		\node at (-4.1,-2.9) {$g_{ij}$,};
		\draw (-3.5,-2.7) to[out=-60,in=-120] (-2.5,-2.7);
		\node at (-2.1,-2.9) {$g^{ij}$};

		\node at (4,3.8) {Tree-like structure};
		\node at (4,3.3) {of language};
	
		\draw (2.5,-1) -- (2.5,0.5) -- (5,0.5) -- (5,-1) -- cycle;
		\draw (5,0.5) -- (5.5,1) -- (5.5,-0.5) -- (5,-1);
		\draw (2.5,0.5) -- (3,1) -- (5.5,1);
		\node at (3.75,-0.25) {LLM};
	
		\draw (2.6,-3) -- (3.4,-0.7) -- (4.1,-0.7) -- (3.3,-3) -- cycle;
		\node[rotate=71] at (3.35,-1.85) {language in $\rightarrow$};
		\draw (4.1,1) -- (4.9,2.9) -- (5.6,2.9) -- (4.8,1) -- cycle;
		\node[rotate=68] at (4.8,1.95) {next token $\rightarrow$};
	\end{tikzpicture}
	\caption{Q: What do emergence of qubits in high energy theory and tree-like structure of language have in common? A: Emergent tensor structures. These examples motivate the problem of detecting such decompositions.}
\end{figure}

We begin by reviewing the singular value decomposition. There are two settings: a linear map $M: W \ra Y$ where (1) $W$ and $Y$ are finite dimensional vector spaces of $\R$, each with a non-singular bilinear form (inner product) $\langle,\rangle$, or (2) $W$ and $Y$ are finite dimensional vector spaces over $\C$, each with a non-singular Hermitian inner product, also denoted by $\langle,\rangle$. The first case is natural in NNs and language models, the second case is more natural in quantum applications. Since the two are parallel, we treat only the second. To obtain the first case from the second, merely replace all unitary groups with the corresponding orthogonal group. The map $M$ may be written as
\begin{equation}
	M = \U \Sigma V^\dagger,
\end{equation}
where $\U \in \U(Y)$, $V \in \U(W)$, the respective groups of Hermitian isometries, and $\Sigma$ is a diagonal matrix with non-negative real entries (in some chosen orthonormal basis for $W$ and $Y$; different choices of basis merely conjugate the unitaries). These entries are the \emph{singular values}. Note that the inner product $\langle,\rangle$ is required to make sense of the notion that $\U$ and $V$ are unitary. By definition, unitary transformations are linear maps which preserve the Hermitian inner product $\langle,\rangle$.

If $M_1: W_1 \ra Y_1$ and $M_2: W_2 \ra Y_2$ are tensored, it is immediate that
\begin{equation}
	M_1 \otimes M_2 = (\U_1 \otimes \U_2)(\Sigma_1 \otimes \Sigma_2)(V_1^\dagger \otimes V_2^\dagger),
\end{equation}
i.e.\ the SVDs tensor. More generally, if there are $r$ tensor factors,
\begin{equation}\label{eq:r-tensors}
	M_1 \otimes \cdots \otimes M_r = (\U_1 \otimes \cdots \otimes \U_r)(\Sigma_1 \otimes \cdots \otimes \Sigma_r)(V_1^\dagger \otimes \cdots \otimes V_r^\dagger).
\end{equation}

Although $\Sigma_1 \otimes \Sigma_2$ is diagonal, in the tensor basis it is helpful to also consider as a rectangular array of non-negative reals, where the $(i,j)$ entry is $\sigma_{1,i} \sigma_{2,j}$, $\{\sigma_{1,i}\}$ and $\{\sigma_{2,j}\}$ being the singular values of $M_1$ and $M_2$, respectively. More generally, the diagonal entries of $\Sigma_1 \otimes \cdots \otimes \Sigma_r$ may be pictured as lying in an $r$-dimensional rectangular solid with typical entry:
\begin{equation}\label{eq:typical-entry}
	(i_1,\dots,i_r)\text{-entry of } \Sigma_1 \otimes \cdots \otimes \Sigma_r = \sigma_{1,i_1} \cdots \sigma_{r,i_r}
\end{equation}

From \eqref{eq:r-tensors} it is clear that if $M,M^\pr: W \ra Y$ have the same (multi-)set of singular values, then $M = XM^\pr Z$ for appropriate unitary $X \in \U(Y)$ and $Z \in \U(W)$.

It follows that if the singular values of $M$ biject with an array with the form of \eqref{eq:typical-entry}, then $M$ admits a (hidden) tensor decomposition of that same form, and vice versa. Hidden tensor decomposition is a spectral question.

Notice that it is highly unlikely for a random spectrum to assume any block rectangular form, as in \eqref{eq:typical-entry}, beyond the trivial case where all but one tensor factor is of dimension one. That is, the multiplicative expression \eqref{eq:typical-entry} is highly over-determined. This is made even clearer by focusing on $\{\log(\text{singular values})\}$.

\subsubsection*{Observation} $M: W^n \ra Y^n$ can be built as $M = A(M_1 \otimes \cdots \otimes M_r)B^\dagger$, where $M_i: W_i^{n_i} \ra Y_i^{m_i}$, $\prod_{i=1}^r n_i = n$, and $\prod_{i=1}^r m_i = m$, with $A: W^n \ra \otimes_{i=1}^n W^{n_i}$ and $B: Y^n \ra \otimes_{i=1}^r Y^{n_i}$, iff the (possibly repeated) $\log(\text{singular values})$ of $M$ can be used to fill an $n_1 \times \cdots \times n_r$ array so that the additive analog of \eqref{eq:typical-entry} holds, i.e., the $r$-axis of the arrays are labeled by $\log(\text{singular values})$ of $M_i$, $1 \leq i \leq r$, and each array entry is the sum of the values on its $r$ coordinates.

This observation creates a bridge, explored in Appendix \ref{sec:bridge}, between tensor factorization and the Kolmogorov-Arnold representability theorem, the mathematical basis of the well-known assertion \cite{hn87} that given arbitrary width, any continuous function can be represented by a neural net with only one hidden layer.

More immediately, the observation allows us to set up a permutation dependent over-determined system of linear equations with the singular values of $M$ as constants, whose solution is equivalent to factoring $M$ as $A(M_1 \otimes \cdots \otimes M_r)B^\dagger$, $A$ and $B$ unitary. For example, set $r=2$, select variables $x_1,\dots,x_{k_1}$ and $y_1,\dots,y_{k_2}$, where $k_i = \min(n_i,m_i)$, $i \in \{1,2\}$, and set $k = k_1k_2$. The equations for the logarithmic version of \eqref{eq:typical-entry} become:
\begin{equation}
	x_i + y_j = \log(\sigma_{p(i,j)}),\ 1 \leq i \leq k_1 \text{, and } 1 \leq j \leq k_2,
\end{equation}
where $\sigma$ denotes a singular value of $M$ and $p: \{1,\dots,k_1\} \times \{1,\dots,k_2\} \ra \{1,\dots,k\}$ is an unknown bijection. The letter $p$ is used to evoke a choice of permutation. For general $r$, the equations are:
\begin{equation}\label{eq:tensor-problem}
	x_{i_1} + \cdots + x_{i_r} = \log(\sigma_{p(i_1,\dots,i_r)}),\ 1 \leq i_j \leq k_j,\ 1 \leq j \leq r,\ k = \prod_{i=1}^r k_i,
\end{equation}
and $p: \{(1,\dots,n_{i_1}\} \times \cdots \times \{1, \dots, n_{i_r}\} \ra \{1,\dots,n\}$ is an unknown bijection. Given a collection of $k$ $\log(\text{singular values})$, we have the \emph{tensor factor problem} (with format $(k_1,\dots,k_r)$): is there a bijection $p$ as on \eqref{eq:tensor-problem} so that the variables (on the lhs) may be set to non-negative real values to obtain a solution. This problem comes in both \emph{exact} and myriad \emph{approximate} versions where there is some error budget. In the latter case, the admissible error might be some $\epsilon$ per each equation, or a rms budget for all errors, or some fraction of equations that must be solved to some tolerance, etc.

For a fixed $p$, exact solution, if possible, is easy. With $p$ still fixed there is a literature on \emph{learning with errors} (LWE), often over finite fields, with some cases of approximate solution conjectured to be difficult \cite{regev09}. But the larger difficulty is in the unknown ``permutation'' $p$. Such hidden permutations are the basis of many famous NP-complete problems, such as the traveling salesman problem\footnote{On the other hand, the graph isomorphism problem, which is widely believed \emph{not} to be NP-complete, is also masked by a hidden permutation.} and likely makes many versions of the tensor factoring problem NP-complete.

In joint work to appear \cite{fbt24}, various greedy algorithms are studied which work well when the spectrum of singular values decays rapidly. In such cases, one may well guess that the largest singular values should fit into the extreme (most positive) corner of the rectangular solid.

Any special spectral feature which aids the detection of tensor decomposisions is of interest. In the positive example of section \ref{sec:with-structure}, the recurrence algorithm succeeds when applied to tensor products with such an additional feature.

We close this section by locating the concept ``hidden tensor decomposition'' as a special case of ``hidden tensor rank'' (HTR) and ``unitary hidden tensor rank'' (UHTR).

For the problem of decomposing a unitary $\U: \mathcal{H}^{2n} \ra \mathcal{H}^{2n}$ into a (hidden) tensor form:
\begin{equation}\label{eq:unitary-tensor}
	\U = V(\U_1 \otimes \cdots \otimes \U_r)V^\dagger,\ U_i: \mathcal{H}^{2^{n_i}} \ra \mathcal{H}^{2^{n_i}},\ \sum_{i=1}^r n_i = n.
\end{equation}

Similar to our discussion of singular values, the decomposition \eqref{eq:unitary-tensor} exists iff the $2^n$ eigenvalues of $\U$ can fill a $2^{n_1} \times \cdots \times 2^{n_r}$ rectangular solid array so that \eqref{eq:typical-entry} holds. (Since these eigenvalues are not positive reals, any logarithm introduces indeterminacy, so we do not take one.)

Since the spectra of tensor products (both in the context of singular values and eigenvalues) enjoy so much internal structure, the existence of hidden tensor decomposition may be a good target for quantum algorithms beyond the recurrence testing described in section \ref{sec:rec-alg}. We leave this for future consideration.

Before leaving the subject of tensor decomposition of operators, let us point out that a decomposition
\begin{equation}
	M = T(M_1 \otimes M_2)V^\dagger
\end{equation}
says that $M$ has ``rank 1'' but \emph{not} in the usual sense of matrix rank, but in a sense we call ``hom-tensor-rank'' (HTR), defined below. In a unitary context, we may require that the isometries $T$ and $V$ agree, $T=V$. This reduced freedom leads to the notion of unitary hom-tensor-rank (UHTR). We explain.

$M: W \ra Y$ is a functional on $W$ with values in $Y$, hence $M \in W^\ast \otimes Y$. The familiar meaning of ``$M$ has matrix rank 1'' ($\equiv$ ``$M$ has row rank 1'' $\equiv$ ``$M$ has column rank 1'') is that there exists a $w^\ast \in W^\ast$ and a $y \in Y$ so that $M = w^\ast \otimes y$. Recall the general element of $W^\ast \otimes Y$ has the form $\sum_{i=1}^r w_i^\ast \otimes y_i$, and the minimal $r$ required to write $M$ in this way is matrix rank$(M)$, as commonly defined. That is, the familiar meaning of rank is the minimal number of tensor summands necessary to describe $M$.

But now, consider the general operator between inner product spaces with fixed tensor factorings: $M: W \ra Y$, where $(W, \langle,\rangle_W) \cong (W_1 \otimes W_2, \langle,\rangle_{W_1} \cd \langle,\rangle_{W_2})$, i.e.\ $\langle w_1 \otimes w_2, \tld{w}_1 \otimes \tld{w}_2\rangle_W = \langle w_1, \tld{w}_1\rangle_{W_1} \cd \langle w_2,\tld{w}_2\rangle_{W_2}$, under the isomorphism, and similarly for $Y$.
\begin{equation}\label{eq:tensor-factors}
	\begin{split}
	M \in (W_1 \otimes W_2)^\ast \otimes (Y_1 \otimes Y_2) & \cong W_1^\ast \otimes W_2^\ast \otimes Y_1 \otimes Y_2 \\
	& \cong W_1^\ast \otimes Y_1 \otimes W_2^\ast \otimes Y_2 \\
	& \cong \operatorname{Hom}(W_1,Y_1) \otimes \operatorname{Hom}(W_2,Y_2)
	\end{split}
\end{equation}
$M$ can always be written as $M = \oplus_{i=1}^r M_{1,i} \otimes M_{2,i}$, $M_{1,i}: W_1 \ra Y_1$ and $M_{2,i}: W_2 \ra Y_2$. We define HTR to be the minimum such $r$ as we vary over possible isometries $W_1 \otimes W_2 \ra W$ and $Y_1 \otimes Y_2 \ra Y$ which may be parametrized by elements of the unitary groups $\U(W)$ and $\U(Y)$, respectively. UHTR is defined in the same way except that since $T=V$, the minimization is over a single isometry $H_1 \otimes H_2 \leftrightarrow H$. In particular, if for appropriate isometries $M$ decomposes as $M = M_1 \otimes M_2$, $M_1 \in \operatorname{Hom}(W_1,Y_1)$ and $M_2 \in \operatorname{Hom}(W_2,Y_2)$, then $M$ has HTR$(M) = 1$. HTR comes from the ordinary tensor rank w.r.t.\ the tensor indicated on the last line of \eqref{eq:tensor-factors}.\footnote{UHTR should not be confused with Krauss rank (see Theorems 8.1, 8.2, and 8.3 of \cite{nc00}). Even in the case that $M$ is a completely positive super-operator, they are quite distinct. For example, conjugation by $\U$ on $\operatorname{Hom}(H,H) \cong H^\ast \otimes H$ has Krauss rank one and may be written as $\sum_{h^\ast,h^\pr} h^\ast \otimes \U^\dagger h^\ast \otimes h^\pr \otimes \U h^\pr$, the sum over bases for $H^\ast$ and $H$. But UHTR (conjugation by $\U$) is the rank over the middle $\otimes$, which, generically, is large.}

Matrix rank of operator is the bread and butter of applied math; one often reduces dimension by truncating to low rank approximation. Matrix rank is as easy to compute as SVDs. What makes HTR$(M)$ difficult to compute is that it may be hidden, perhaps even in a cryptographic sense, by the possible coordinate change inherent in the action of the two unitary groups. For example, HTR$(M) = 1$ iff it can be written as:
\begin{equation}
	M = \U(M_1 \otimes M_2) V^\dagger.
\end{equation}
where $\U \in \U(W)$ and $Y \in \U(Y)$.

\bibliography{references}

\appendix
\section{No Unitary Spectral Gap (NUSG) is QMA-Complete}\label{sec:qma-complete}
\begin{definition}[QMA]
	Fix $\epsilon = \epsilon(\abs{x}) \leq \frac{1}{3}$, but no smaller than exponentially small in $\abs{x}$, $x \in \{0,1\}^\ast$. A language $L$ is in QMA if for all $x \in \{0,1\}^\ast$, $x$ efficiently dictates a ``verifying quantum circuit'' $\U_x$. $\U_x$ has an input register that can hold a (possibly mixed) witness state $\rho$ and an ancilla register. $\U_x$ must satisfy:
	\begin{enumerate}
		\item If $x \in L$, $\exists \rho$ s.t.\ $\operatorname{tr}(\U_x(\rho \otimes | 0 \cdots 0 \rangle \langle 0 \cdots 0 | \U_x^\dagger P_1)) \geq 1 - \epsilon$, and
		\item If $x \notin L$, $\forall \rho$, $\operatorname{tr}(\U_x(\rho \otimes | 0 \cdots 0 \rangle \langle 0 \cdots 0 | \U_x^\dagger P_1)) \leq \epsilon$,
	\end{enumerate}
	where $P_1$ is the projector equivalent to ``first qubit of input register is in state $|1\rangle$.'' The allowed choices of $\epsilon$ all yield the same computational class.

	We say that in case 1, with witness $\rho$, $\U_x$ accepts with probability $\geq 1-\epsilon$, and in case 2, for all witnesses $\rho$, $\U_x$ rejects with probability $\geq 1 - \epsilon$. (It is sufficient to consider $\rho = |\psi\rangle\langle\psi|$, pure.)
\end{definition}

\begin{definition}[No Unitary Spectral Gap NUSG$(\delta_0)$]
	This is the language of classical descriptions for quantum circuits $\U_x$ with the property $\operatorname{spec}(\U_x) \cap \{e^{i\delta} \mid -\delta_0 < \delta < \delta_0\} \neq \varnothing$ for fixed $0 < \delta_0 < 0.01$ ($\delta_0$ should be small but the upper bound 0.01 is arbitrary).
\end{definition}

\begin{definition}[NUSG Problem]
	Let $x \in \{0,1\}^\ast$ be an efficient classical description of quantum circuit $\U_x$. The no unitary spectral gap problem is: Given the premise that either $x \in \operatorname{NUSG}(\delta_0)$ or $x \notin \operatorname{NUSG}(10 \delta)$ (the constant 10 is arbitrary), determine which alternative holds.
\end{definition}

\begin{theorem}\label{thm:nusg-qma}
	$\operatorname{NUSG}(\delta_0)$ is QMA-complete.
\end{theorem}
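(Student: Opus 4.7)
The plan is to establish both $\operatorname{NUSG}(\delta_0) \in$ QMA and QMA-hardness separately: containment via a Hadamard test on a claimed eigenvector, and hardness via a two-reflection construction analyzed by Jordan's two-projector lemma.

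For containment, the quantum witness is a state $|\psi\rangle$ purporting to be an eigenvector of $\U_x$ with eigenvalue near $1$. The verifier runs the Hadamard test on controlled-$\U_x$: prepare an ancilla qubit in $|+\rangle$, apply controlled-$\U_x$ to the target $|\psi\rangle$, Hadamard the ancilla, and measure. The outcome-$0$ probability is $(1+\operatorname{Re}\langle\psi|\U_x|\psi\rangle)/2$, so polynomially many repetitions estimate $\operatorname{Re}\langle\psi|\U_x|\psi\rangle$ to inverse-polynomial precision. Decomposing $|\psi\rangle=\sum_i c_i|\phi_i\rangle$ in the eigenbasis with $\U_x|\phi_i\rangle=e^{i\te_i}|\phi_i\rangle$, we have $\operatorname{Re}\langle\psi|\U_x|\psi\rangle=\sum_i|c_i|^2\cos\te_i$. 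In a YES-instance, choosing $|\psi\rangle$ to be an eigenvector with $|\te_i|<\delta_0$ gives a value $\geq\cos\delta_0$; in a NO-instance, every $|\te_i|>10\delta_0$ bounds the value by $\cos(10\delta_0)$. The constant gap $\cos\delta_0-\cos(10\delta_0)>0$ (valid since $\delta_0<0.01$) is amplified to the standard QMA completeness/soundness bounds by repetition and majority vote.

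For QMA-hardness I reduce from a generic QMA verification problem. Given an amplified verifier $V_x$ with completeness $\geq 1-\epsilon$ and soundness $\leq\epsilon$ (with $\epsilon$ polynomially small relative to $\delta_0^2$ via standard QMA amplification), introduce the two orthogonal projectors
\[
\Pi_1 := V_x^\dagger(P_1\otimes I)V_x, \qquad \Pi_2 := I_w\otimes|0^m\rangle\langle 0^m|_a,
\]
representing ``accepting'' and ``correctly initialized,'' respectively, and form $C_x := (I-2\Pi_1)(I-2\Pi_2)$, a product of two Hermitian reflections. Both reflections admit polynomial-size circuit implementations: $I-2\Pi_1$ as $V_x^\dagger Z_1 V_x$ (with $Z_1$ a Pauli-$Z$ on the first qubit) and $I-2\Pi_2$ as a multi-controlled phase on the ancilla qubits. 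By Jordan's two-projector lemma, the Hilbert space decomposes into joint invariant subspaces of dimension at most two; on each $2$D block with principal angle $\te_j$, $C_x$ restricts to a plane rotation by $2\te_j$, contributing eigenvalues $e^{\pm 2i\te_j}$. The smallest principal angle satisfies $\cos^2\te_{\min}=\max_{|\psi\rangle}\|P_1 V_x|\psi,0^m\rangle\|^2$, i.e.\ the maximum acceptance probability of $V_x$. In YES, $\te_{\min}=O(\sqrt\epsilon)<\delta_0$, producing an eigenvalue of $C_x$ with argument in $(-\delta_0,\delta_0)$; in NO, every $\te_j\geq\arccos\sqrt\epsilon=\pi/2-O(\sqrt\epsilon)$, so the $2$D-block eigenvalues cluster near $-1$, well outside $(-10\delta_0,10\delta_0)$.

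The principal technical hurdle is the parasitic $+1$ eigenvalue on the one-dimensional joint invariant subspaces of $(\Pi_1,\Pi_2)$---in particular the ``null'' subspace $(\operatorname{im}\Pi_1)^\perp \cap (\operatorname{im}\Pi_2)^\perp$, which is generically of exponential dimension and on which $C_x$ acts as the identity, contaminating the spectrum with eigenvalue $1$ in both the YES and NO cases. I would address this by adjoining an auxiliary control qubit and replacing the simple reflection $I-2\Pi_2$ by a ``flagged'' Szegedy-style reflection designed so that on the joint null the walk acquires a fixed nontrivial phase well outside $(-10\delta_0,10\delta_0)$, while perturbing the $2$D-block dynamics by only $o(\delta_0)$. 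A key observation that makes this workable is that the remaining $1$D block $\operatorname{im}\Pi_1\cap\operatorname{im}\Pi_2$---the only other potential source of a $+1$ eigenvalue---is $\{0\}$ in the NO case by the soundness bound on $V_x$, and is precisely where the accepting witness lives in the YES case. Verifying that the modified reflection is still polynomial-size and that Jordan's decomposition on the enlarged space confines all $+1$ eigenvalues to this single block is the main technical content of the proof; once this is in place, the NUSG-YES/NO alternatives align with QMA-YES/NO, completing the reduction and hence the QMA-completeness statement.
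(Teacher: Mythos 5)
Your containment argument (Hadamard test on an eigenwitness) is fine and is essentially a one-bit version of the quantum phase estimation the paper invokes. The hardness direction, however, takes a genuinely different route from the paper and contains a gap that you correctly flag but do not actually close.

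Your construction $C_x = (I-2\Pi_1)(I-2\Pi_2)$ analyzed through Jordan's two-projector lemma is a natural idea, and the analysis of the two-dimensional blocks is correct: the smallest principal angle is $O(\sqrt{\epsilon})$ in the YES case and near $\pi/2$ in the NO case, so those blocks behave as wanted. The difficulty is exactly the one you name --- the one-dimensional block $\ker\Pi_1\cap\ker\Pi_2$. Since $\dim\ker\Pi_1 = 2^{w+m-1}$ and $\dim\ker\Pi_2 = 2^{w+m}-2^w$, their intersection has dimension at least $2^{w+m-1}-2^w$, which is generically exponential and nonzero in both the YES and NO cases, and $C_x$ acts as $+1$ there. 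This is not a nuisance to be patched; as written it makes $C_x$ a member of $\operatorname{NUSG}(\delta_0)$ on every input, so the reduction fails. Your proposed remedy --- an auxiliary control qubit and a ``flagged Szegedy-style reflection'' that gives the null block a fixed nontrivial phase while perturbing the $2$D blocks by $o(\delta_0)$ --- is only a sketch. It is unclear how the verifier, without explicit access to the null subspace, can single it out for a phase shift without also phase-shifting the $\ker\Pi_1\cap\operatorname{im}\Pi_2$ and $\operatorname{im}\Pi_1\cap\ker\Pi_2$ components that feed the $2$D Jordan blocks; any uniform modification of $I-2\Pi_2$ will act on all of $\ker\Pi_2$, not just the joint null. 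Carrying this out (and re-deriving a clean Jordan decomposition on the enlarged space) is the real work, and you acknowledge it is not done.

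For contrast, the paper's construction $Z = H\U^\dagger Y\U V H$, adapted from the identity-check QMA-completeness proof of \cite{jwb05}, handles this exact issue from the start. The operator $V$ applies a small controlled phase $e^{\mp i\phi}$ on the top ancilla when the lower ancillas are $|0\cdots 0\rangle$, but applies an \emph{overall} phase $e^{2i\phi}$ on the orthogonal complement --- the same subspace that causes you trouble. The phase-kickback $Y$ can then cancel $V$'s phase only when $\U_x$ accepts, and on the ``wrong ancilla'' subspace $V$'s global phase is unopposed, giving $\lVert Z|\Phi_2\rangle - |\Phi_2\rangle\rVert \gtrsim \sin\phi$ there. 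So the spectral gap in the NO case is enforced uniformly, with no parasitic $+1$ eigenspace. If you want to rescue the two-reflection approach, you would do well to study how the $V$ gadget in the paper treats the bad-ancilla subspace and port that mechanism in explicitly, rather than appealing to an unspecified Szegedy-style fix.
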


The proof is an adaptation of the proof that ``identity check'' is QMA-complete \cite{jwb05}; we have tried to keep our notation parallel to theirs. Like their proof, our direct construction only yields a small probabilistic separation between ``accept'' and ``reject,'' but \cite{ksv02} details how this separation may be amplified.

\begin{proof}
	Given appropriate eigenwitnesses, the quantum phase estimation algorithm approximates its eigenvalues, showing NUSG is in QMA.

	To check completeness, we use the following circuit; compare to Figure 2 of \cite{jwb05}.

	\begin{figure}[ht]
		\centering
		\begin{tikzpicture}
			\draw (-6,-4) -- (6,-4);
			\draw (-6,-2.5) -- (6,-2.5);
			\draw (-6,-1.75) -- (6,-1.75);
			\draw (-6,-0.25) -- (6,-0.25);
			\draw (-6,1.25) -- (6,1.25);
			\draw (-6,2) -- (6,2);
			\draw (-7,3.5) -- (6,3.5);
			\node at (-7.4,3.5) {$| 0 \rangle$};
		
			\draw[fill=white] (-4,2.5) -- (-2,2.5) -- (-2,-4.5) -- (-4,-4.5) -- cycle;
			\node at (-3,-1) {$\U_x$};
			\draw[fill=white] (4,2.5) -- (2,2.5) -- (2,-4.5) -- (4,-4.5) -- cycle;
			\node at (3,-1) {$\U_x^\dagger$};
			\draw (-4.7,3) -- (-4.7,-4);
			\draw[fill=black] (-4.7,-1.75) circle (0.4ex);
			\draw[fill=black] (-4.7,-2.5) circle (0.4ex);
			\draw[fill=black] (-4.7,-4) circle (0.4ex);
			
			\node at (-5.3,-3.2) {$\vdots$};
			\node at (0,-3.2) {$\vdots$};
			\node at (5,-3.2) {$\vdots$};
			\node at (-5.3,0.5) {$\vdots$};
			\node at (0,0.5) {$\vdots$};
			\node at (5,0.5) {$\vdots$};
			\node at (-6.5,0.4) {$|\psi\rangle$};
			\node at (-6.5,-4) {$| 0 \rangle$};
			\node at (-6.5,-3.2) {$\vdots$};
			\node at (-6.5,-1.75) {$| 0 \rangle$};
			\node at (-6.5,-2.5) {$| 0 \rangle$};
			
			\draw[fill=white] (-5.5,4.25) -- (-3.5,4.25) -- (-3.5,2.75) -- (-5.5,2.75) -- cycle;
			\node at (-4.5,3.5) {$V$};
			\draw (0,2.8) -- (0,2);
			\draw[fill=black] (0,2) circle (0.4ex);
			\node at (-4.7,-5) {$V$};
			\node at (-3,-5) {$\U$};
			\node at (0,-5) {$Y$};
			\node at (3,-5) {$\U^\dagger$};
		
			\draw[fill=white] (-1.5,4.25) -- (1.5,4.25) -- (1.5,2.75) -- (-1.5,2.75) -- cycle;
			\node at (0,3.5) {\footnotesize{$C\begin{pmatrix} e^{i\phi} & 0 \\ 0 & e^{-i\phi} \end{pmatrix}$}};
			\draw[fill=white] (2.7,4) -- (3.3,4) -- (3.3,3) -- (2.7,3) -- cycle;
			\node at (3,3.5) {$H$};
			\draw[fill=white] (-6.6,4) -- (-6,4) -- (-6,3) -- (-6.6,3) -- cycle;
			\node at (-6.3,3.5) {$H$};
		
			\node[rotate=90] at (-8.3,3.5) {ancilla};
			\node[rotate=90] at (-8.3,0.4) {input};
			\node[rotate=90] at (-7.9,0.4) {witness};
			\node[rotate=90] at (-8.1,-3) {ancillas};
		\end{tikzpicture}
		\caption{$Z := H \U^\dagger Y \U V H$.}\label{fig:completeness}
	\end{figure}

	By definition, $V$ induces the operator
	$\begin{pmatrix}
		e^{-i\phi} & 0 \\[-0.75em] 0 & e^{i\phi}
	\end{pmatrix}$
	on the top ancilla when the lower ancillas are precisely in state $|0 \cdots 0\rangle$ and an overall phase shift
	$\begin{pmatrix}
		e^{2i\phi} & 0 \\[-0.75em] 0 & e^{2i\phi}
	\end{pmatrix}$
	for the other (basic) ancilla settings. The unitary $Y$ is controlled by the first qubit of the input register and applies the operator
	$\begin{pmatrix}
		e^{i\phi} & 0 \\[-0.75em] 0 & e^{-i\phi}
	\end{pmatrix}$
	exactly when this first qubit is in state $|1\rangle$, the accept state for $\U_x$. The angle $\phi$ should be small. It will be chosen later so as not to be too small with respect to the $\epsilon>0$ in the definition of QMA. We will need $\phi \gg \sqrt{\epsilon}$.

	The first input qubit of $\U_x$ is used to determine acceptance, indicated by the state $|1\rangle$, of witness $\psi$ with ancillas set in the state $|0 \cdots 0 \rangle$. Let $P_1$ be the projection onto the $|1\rangle$-state of this qubit and $P_0 := 1 - P_1$ be the complementary projection to $|0\rangle$.

	Define
	\begin{equation}
		| \Psi \rangle = |0 \rangle \otimes | \psi \rangle \otimes | 0 \cdots 0 \rangle,
	\end{equation}
	and assume $| \psi \rangle$ is accepted by $\U$ (case 1) with probability $\geq 1-\epsilon$; we then compute the action of the circuit on $\Psi$:
	\begin{align*}
		Z | \Psi \rangle & = H \U^\dagger Y \U V (H(| 0 \rangle) \otimes | \psi \rangle \otimes |0 \cdots 0 \rangle) \\
		& = H \U^\dagger Y \U \left(\frac{1}{\sqrt{2}}(e^{-i\phi}|0\rangle + e^{i\phi}|1 \rangle) \otimes | \psi \rangle \otimes | 0 \cdots 0 \rangle\right) \\
		& = H \U^\dagger Y (1-P_0) \U \left(\frac{1}{\sqrt{2}}(e^{-i\phi}|0\rangle + e^{i\phi}|1 \rangle) \otimes | \psi \rangle \otimes | 0 \cdots 0 \rangle\right) \\
		& \quad + H \U^\dagger Y P_0 \U \left(\frac{1}{\sqrt{2}}(e^{-i\phi}|0\rangle + e^{i\phi}|1 \rangle) \otimes | \psi \rangle \otimes | 0 \cdots 0 \rangle\right) \\
		& = H \U^\dagger (1-P_0) \U \left(\frac{1}{\sqrt{2}}(|0\rangle + |1 \rangle) \otimes | \psi \rangle \otimes | 0 \cdots 0 \rangle\right) \\
		& \quad + H \U^\dagger Y P_0 \U \left(\frac{1}{\sqrt{2}}(e^{-i\phi}|0\rangle + e^{i\phi}|1 \rangle) \otimes | \psi \rangle \otimes | 0 \cdots 0 \rangle\right) \\
		& = H \left(\frac{1}{\sqrt{2}}(|0\rangle + |1 \rangle) \otimes | \psi \rangle \otimes | 0 \cdots 0 \rangle\right) \\
		& \quad - H \U^\dagger P_0 \U \left(\frac{1}{\sqrt{2}}(|0\rangle + |1 \rangle) \otimes | \psi \rangle \otimes | 0 \cdots 0 \rangle\right) \\
		& \quad + H \U^\dagger P_0 \U \left(\frac{1}{\sqrt{2}}(e^{-i\phi}|0\rangle + e^{i\phi}|1 \rangle) \otimes | \psi \rangle \otimes | 0 \cdots 0 \rangle\right)
		=: | \Psi \rangle - | \phi_1 \rangle + | \phi_2 \rangle
	\end{align*}
	The acceptance probability of $|\psi \rangle$ implies norm$| \phi_i \rangle \leq \sqrt{\epsilon}$, $i = 1,2$, so
	\begin{equation}\label{eq:accept-prob}
		\lVert Z| \Psi \rangle - |\Psi \rangle \rVert \leq 2 \sqrt{\epsilon}.
	\end{equation}

	Now consider case 2, where $\U_x = \U$ accepts the general $|\psi\rangle \otimes |0 \cdots 0 \rangle$ with probability $\leq \epsilon$. We need to consider the general input $|\Phi\rangle$ to $Z$. It is convenient to decompose $| \Phi \rangle = | \Phi_1 \rangle \oplus | \Phi_2 \rangle$ where $\Phi_1$ is in the span of states with ancilla qubits set to $|0 \cdots 0 \rangle$ and $\Phi_2$ is in the span with the other ancilla basis states. We call these subspaces in the first and second summands, respectively.
	\begin{equation}
		Z|\Phi\rangle = Z |\Phi_1\rangle + Z|\Phi_2\rangle.
	\end{equation}

	Let us evaluate $Z | \Phi_1 \rangle$ first.
	\begin{equation}
		H \U^\dagger Y \U V H |\Phi_1\rangle = H\U^\dagger Y P_1 \U V H |\Phi_1\rangle + H\U^\dagger YP_0 \U V H |\Phi_1\rangle
	\end{equation}
	Noting $Y P_0 = P_0$, we obtain:
	\begin{equation}
		\begin{split}
		Z(\Phi_1) & = H\U^\dagger Y P_1 \U V H | \Phi_1 \rangle + H \U^\dagger P_0 \U V H | \Phi_1 \rangle \\
		& = H\U^\dagger Y P_1 \U V H | \Phi_1 \rangle + H V H | \Phi_1 \rangle - H \U^\dagger P_1 \U V H | \Phi_1 \rangle
		\end{split}
	\end{equation}
	$\lVert P_1 \U V H | \Phi_1 \rangle \rVert \leq \sqrt{\epsilon} \lVert | \Phi_1 \rangle \rVert$, since the probability of acceptance is $\leq \epsilon$. Thus:
	\begin{equation}
		\lVert Z | \Phi_1 \rangle - V H | \Phi_1 \rangle \rVert \leq 2 \sqrt{\epsilon} \lVert |\Phi_1 \rangle \rVert
	\end{equation}
	and note that $(VH - H)$ has norm $\sin \phi \approx \phi$ on the first summand.

	Using line \eqref{eq:accept-prob} on this first summand,
	\begin{equation}
		\lVert Z | \Phi_1 \rangle - | \Phi_1 \rangle \rVert \geq \sin \phi - 2 \sqrt{\epsilon} \lVert | \Phi_1 \rangle \rVert.
	\end{equation}
	
	Now consider $Z | \Phi_2 \rangle$. On this second summand $\lVert VH - H \rVert = \lVert V - \id \rVert$, from the definition of $V$ is equal to $\sin 2\phi \approx 2\phi$. Regardless of how $\U$ outputs onto its first qubit the overall $e^{i2\phi}$ phase (from $V$) overwhelms any phases from $Y$ (note output $|1\rangle$ does not necessarily mean ``acceptance'' on the second summand since there the ancilla are not set to $|0 \cdots 0 \rangle$). The result is that on the second summand, we also have
	\begin{equation}
		\lVert Z | \Phi_2 \rangle - | \Phi_2 \rangle \rVert \geq \sin \phi - 2 \sqrt{\epsilon} \lVert | \Phi_2 \rangle \rVert.
	\end{equation}
	The last two lines imply the desired spectral gap: in case 2, for general normalized $| \Phi \rangle$ we have:
	\begin{equation}
		\lVert Z | \Phi \rangle - | \Phi \rangle \rVert \geq \sin \phi - 2 \sqrt{\epsilon}
	\end{equation}
	Since $\phi$ can be chosen small but still satisfying $\sqrt{\epsilon} << \phi$, the spectral gap in case 2 is confirmed. To measure this theoretical gap, the output $Z | \Phi \rangle$ from copies of the circuit in Figure \ref{fig:completeness} should be compared with a fresh copy of $|\Phi\rangle$ using the SWAP test, which efficiently estimates $\abs{\langle \Phi | Z | \Phi \rangle}$. Thus, any black box capable of certifying membership in NUSG can also certify $x \in L$.
\end{proof}

\begin{note}
	Theorem \ref{thm:nusg-qma} exemplifies one of many QMA-complete spectral problems. Verifying the size of a spectral gap is generally DQMA-complete \cite{a13}, a presumably harder status than QMA-complete, and estimating spectral degeneracies even harder, being complete for \#QMA \cite{bfs11}. The latter problem related both to our approach to recurrence detection and detection of the tensor structure.
\end{note}

\section{A Curious Bridge Between Tensor Factoring and the Kolmogorov-Arnold Representation Theorem}\label{sec:bridge}

Hilbert's 13th problem was on what he thought was the impossibility of expressing the local motion of a root $r$ of a high degree polynomial as an iterated concatenation of 1-variable functions of the coefficients which are allowed to be combined into multivariable functions only by ``plus'' (``times'' is implicitly okay to use as well since $x \cd y = \exp(\log x + \log y)$). The quadratic formula
\[
	r = \frac{-b \pm \sqrt{b^2 - 4ac}}{2a}
\]
has this structure, as do the closed form solutions in degrees 3 and 4.

Hilbert's intuition was based on ``dimension count'': It feels like there are too many 2-variable functions to write them all out in terms of 1-variable functions. As Hilbert knew, this dimension counting argument can be made rigorous in the world of analytic functions; however, Hilbert stated his problem in the continuous setting. There, as Kolmogorov \cite{kol57} and Arnold \cite{arn56} showed, his intuition was mistaken. The result, stated in one of many possible forms, is given below. To understand how low regularity can confound dimensional intuition, it is useful to recall the ``space filling curve'' $\beta: [0,1] \twoheadrightarrow [0,1]^2$. $\beta$ is defined by ``coordinate splitting'' the decimal expansion of each $x \in [0,1]$, e.g.\ $\beta(0.00110101\dots) = (0.0100\dots,0.0111\dots)$. This map is far from analytic but is H\"{o}lder-$\frac{1}{2}$ continuous. Similar, but more ingenious devices, underlie the proof of Theorem \ref{thm:kol-arn}.

\begin{theorem}[Kolmogorov-Arnold (KA) representation theorem]\label{thm:kol-arn}
	Given $n \geq 2$, $\exists n(2n+1)$ continuous functions $\phi_{p,q}: I \ra I$, $1 \leq p \leq n$, $1 \leq q \leq 2n+1$, such that for continuous $f: I^n \ra I$, there exists continuous $\alpha_q: I \ra I$, $1 \leq q \leq 2n+1$ such that $f(x_1,\dots,x_n) = \sum_{q=1}^{2n+1} \alpha_q (\sum_{p=1}^n \phi_{p,q}(x_p))$. The $\phi_{p,q}$ can be chosen to be Lipschitz.\footnote{The Lipschitz property is subtle as Lipschitz curves are never space filling, see \cite{brattka07}.}
\end{theorem}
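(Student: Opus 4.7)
The plan is to reduce the theorem to a uniform contraction argument. Write $\Phi_q(x) := \sum_{p=1}^n \phi_{p,q}(x_p)$ and consider the linear ``reconstruction'' map $T: C(I)^{2n+1} \to C(I^n)$ given by $T(\alpha_1, \dots, \alpha_{2n+1}) := \sum_q \alpha_q \circ \Phi_q$. The strategy, going back to Kolmogorov and refined by Arnold, Sprecher, and others, has three steps. First, build universal Lipschitz inner functions $\phi_{p,q}$, independent of $f$, so that a one-step approximation lemma holds: for every $f \in C(I^n)$ there exist $(\alpha_q) \in C(I)^{2n+1}$ with $\lVert \alpha_q \rVert_\infty \leq \tfrac{1}{2n+1} \lVert f \rVert_\infty$ and $\lVert f - T(\vec{\alpha}) \rVert_\infty \leq \lambda \lVert f \rVert_\infty$ for some fixed $\lambda < 1$. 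Second, apply this lemma inductively to the residuals $f_k := f - T(\vec{\alpha}^{(1)} + \cdots + \vec{\alpha}^{(k)})$. Third, sum the uniformly absolutely convergent series $\alpha_q := \sum_{k \geq 1} \alpha_q^{(k)}$; the geometric decay $\lVert f_k \rVert_\infty \leq \lambda^k \lVert f \rVert_\infty$ gives $f = T(\vec{\alpha})$.

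For the inner functions, I would fix, for each $q \in \{1, \dots, 2n+1\}$, a sequence of progressively finer cubical partitions of $I^n$, with the $2n+1$ grids offset from one another so that every $x \in I^n$ lies in the interior of a cube (away from all boundary faces) for at least $n+1$ of the indices $q$ at each chosen scale. The count $2n+1$ is exactly what a pigeonhole argument over shifted lattices in $\R^n$ forces: a point can be within $\delta$ of some boundary face in at most $n$ of $2n+1$ suitably chosen offsets. Each $\phi_{p,q}$ is then built coordinate by coordinate as a sum of monotone Lipschitz perturbations, one per refinement scale, whose amplitudes decay geometrically (to keep the total Lipschitz constant finite) and whose increments are arranged so that $\Phi_q$ separates the cubes of the $q$-th grid into disjoint well-spaced intervals in $\R$. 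The one-step lemma then follows by defining $\alpha_q$ on the interval $\Phi_q(C)$, for each cube $C$ of the $q$-th partition, to equal $\tfrac{1}{2n+1}$ times the average of $f$ on $C$, interpolated smoothly across the gaps and cut off appropriately. At each $x$, the ``good'' $n+1$ indices contribute $\tfrac{n+1}{2n+1} f(x)$ up to the modulus of continuity of $f$ at the partition scale, and the ``bad'' $n$ indices contribute at most $\tfrac{n}{2n+1} \lVert f \rVert_\infty$ in absolute value; the worst-case sum gives $\lVert f - T(\vec{\alpha}) \rVert_\infty \leq \tfrac{2n}{2n+1} \lVert f \rVert_\infty + o(1)$, which is $\leq \lambda \lVert f \rVert_\infty$ for some $\lambda < 1$ once the scale is fine enough.

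The hard part is the inner-function construction. It must simultaneously satisfy three constraints that are in genuine tension: universality (independence of $f$), Lipschitz regularity in the infinite-scale limit (delicate because Lipschitz curves cannot be space filling, so the geometric refinement of the partitions must be carefully balanced against the geometric decay of the amplitudes added at each scale), and the ``strict-majority-of-good-cubes-at-every-point'' combinatorial property of the $(2n+1)$ offset grids (which is precisely what makes $2n+1$ outer functions suffice and what produces the contraction factor $\lambda < 1$). Once the $\phi_{p,q}$ are in place, the remainder of the proof is a standard Banach-iteration/successive-approximation argument in $C(I^n)$; the striking feature, and the resolution of Hilbert's dimensional intuition, is that mere continuity of the inner functions allows them to fold the high-dimensional combinatorics of the nested grids into one-dimensional data in the spirit of a space-filling curve.
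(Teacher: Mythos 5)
The paper does not prove Theorem~\ref{thm:kol-arn}; it states it as a classical result with attribution to Kolmogorov \cite{kol57} and Arnold \cite{arn56}, and then uses it. So there is no ``paper's own proof'' to compare against.

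That said, your outline is a faithful, correct sketch of the standard Kolmogorov--Sprecher--Lorentz proof scheme. The three-step structure (one-step approximation lemma with universal inner functions and norm control $\lVert \alpha_q \rVert_\infty \leq \frac{1}{2n+1}\lVert f\rVert_\infty$, iteration on residuals, uniform absolute convergence of $\sum_k \alpha_q^{(k)}$) is exactly right, as is the combinatorics: with $2n+1$ mutually offset families of boxes one arranges that for each coordinate $p$ the value $x_p$ lands in a gap for at most one $q$, so at least $n+1$ of the $2n+1$ grids contain $x$ in the interior of a box, and the contraction factor works out to $\lambda$ slightly above $\frac{2n}{2n+1} < 1$. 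Your error budget $\frac{n}{2n+1}\lvert f(x)\rvert + \frac{n}{2n+1}\lVert f\rVert_\infty + o(1)$ matches the classical calculation. You also correctly flag where the real work lies: constructing inner functions that are simultaneously universal, genuinely Lipschitz (so that the scale refinement must be balanced against the geometrically decaying amplitude increments --- the same tension the paper flags in its footnote citing \cite{brattka07}), and cube-separating at every scale. Be aware that what you wrote is a proof \emph{plan}, not a proof: the entire burden is in the inner-function construction, which you describe only qualitatively (``sum of monotone Lipschitz perturbations \dots\ whose increments are arranged so that $\Phi_q$ separates the cubes''). Verifying that such perturbations can coexist across all scales, that the resulting $\Phi_q$ images of distinct boxes in the $q$-th grid are disjoint and uniformly separated (so the outer interpolation is well defined), and that the Lipschitz constant stays finite is precisely the content of Kolmogorov's and Sprecher's papers and would need to be filled in for this to be a complete argument.
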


\begin{figure}[ht]
	\centering
	\begin{tikzpicture}[scale=1.1]
		\draw[fill=black] (-4,1) circle (0.2ex);
		\node at (-4.4,1) {$x_1$};
		\draw[fill=black] (-4,-1) circle (0.2ex);
		\node at (-4.4,-1) {$x_2$};

		\draw (-4,1) -- (0,2);
		\draw (-4,1) -- (0,1);
		\draw (-4,1) -- (0,0);
		\draw (-4,1) -- (0,-1);
		\draw (-4,1) -- (0,-2);

		\node at (-2,1.8) {$\phi_{p,q}$};
		
		\draw[dashed] (-4,-1) -- (0,0);
		\draw[dashed] (-4,-1) -- (0,-1);
		\draw[dashed] (-4,-1) -- (0,-2);
		\draw[dashed] (-4,-1) -- (0,1);
		\draw[dashed] (-4,-1) -- (0,2);

		\draw[fill=black] (0,2) circle (0.2ex);
		\draw[fill=black] (0,1) circle (0.2ex);
		\draw[fill=black] (0,0) circle (0.2ex);
		\draw[fill=black] (0,-1) circle (0.2ex);
		\draw[fill=black] (0,-2) circle (0.2ex);

		\node at (2,1.3) {$\alpha_q$};

		\draw (0,2) -- (4,0);
		\draw (0,1) -- (4,0);
		\draw (0,0) -- (4,0);
		\draw (0,-1) -- (4,0);
		\draw (0,-2) -- (4,0);

		\draw[fill=black] (4,0) circle (0.2ex);
		\node at (4.9,0) {$f(x_1,x_2)$};
	\end{tikzpicture}
	\caption{Hidden layer}\label{fig:hidden-layer}
\end{figure}

As Figure \ref{fig:hidden-layer} makes clear, Kolmogorov and Arnold were inventing what we now call neural networks and demonstrating their expressivity. But it took 30 years for the interpretation to emerge \cite{hn87}.

The functions $\{\phi_{p,q}\}$, describe a universal Lipschitz embedding of $I^n$ into $I^{2n+1}$ with the remarkable property that any function on $I^n$ is merely the sum of $2n+1$ ``activation'' functions on the $2n+1$ coordinate projections. An embedding with this property is called \emph{basic}. Even more remarkable, this universal embedding is far from unique; basic embeddings, from the proof, form an infinite-dimensional space.

One caveat: in early formulations the $f$-dependent activation functions $\{\alpha_q\}$ are extremely complicated. However, later work \cite{lsyz21} shows that by increasing the width of a network with a \emph{single} hidden layer from $2n+1$ towards infinity, simple activation functions such as ReLu suffice to approximate any continuous function from $I^n$. They also show how the approximation becomes more efficient as the regularity of $f$ increases. Of course, approximation is still more efficient (and more learnable) if the network is deep rather than shallow.

The ``obstruction'' to an embedding being basic is the existence within its image of certain sets of points called \emph{Sternfeld arrays}, which are either closed or of increasing size. The definition is in general a bit tricky, see \cite{d21} for a modern discussion. But, the discrete analog of a Sternfeld array for a hyper-cubulated hyper-rectangular solid has been discussed in \cite{ss06} and is remarkably well-aligned with the tensor factorization problem discussed in section \ref{sec:hidden-structure}, wherein we sought to find the location of $\{\log(\text{singular values})\}$ in a hyper-rectangular array.

By \emph{definition}, a discrete Sternfeld array (DSA) is a minimal collection of sites in a hyper-cubical lattice with the property that Real values can be assigned to these cubes that \emph{cannot} be reconstructed by summing any collection of real values placed along the coordinate axes. Dually, we may say a DSA is a minimal set of locations inside the hyper-rectangle which supports a signed measure $\mu$ whose push-forward to each coordinate axis vanishes identically.\footnote{The push-forward of measure under proejction to \emph{all} 1-dimensional subspaces is the Radon transform, $\mathcal{R}$. A DSA $X$ is characterized by the property $\mathcal{R}_{e_i}(f) = 0$ for all functions supported on $X$, and $\{e_i,1 \leq i \leq n\}$ the basis for $\R^n$.}

Figure \ref{fig:dsa} is a simple 2D example of a four point DSA with the atomic measure values indicated.

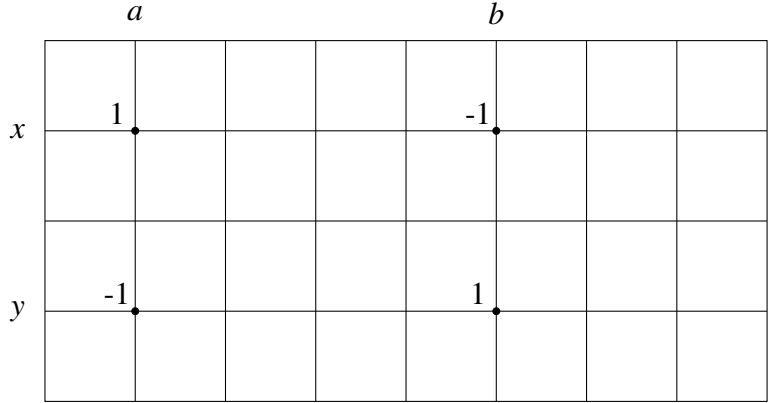
\begin{figure}[ht]
	\centering
	\begin{tikzpicture}[scale=1.2]
		\draw (-4,2) -- (4,2) -- (4,-2) -- (-4,-2) -- cycle;
		\draw (-2,2) -- (-2,-2);
		\draw (-3,2) -- (-3,-2);
		\draw (-1,2) -- (-1,-2);
		\draw (0,2) -- (0,-2);
		\draw (1,2) -- (1,-2);
		\draw (2,2) -- (2,-2);
		\draw (3,2) -- (3,-2);
		\draw (-4,1) -- (4,1);
		\draw (-4,0) -- (4,0);
		\draw (-4,-1) -- (4,-1);
	
		\draw[fill=black] (-3,1) circle (0.2ex);
		\draw[fill=black] (-3,-1) circle (0.2ex);
		\draw[fill=black] (1,1) circle (0.2ex);
		\draw[fill=black] (1,-1) circle (0.2ex);
	
		\node at (-3.2,1.2) {1};
		\node at (-3.2,-0.8) {-1};
		\node at (0.8,1.2) {-1};
		\node at (0.8,-0.8) {1};
		\node at (-3,2.3) {$a$};
		\node at (1,2.3) {$b$};
		\node at (-4.3,1) {$x$};
		\node at (-4.3,-1) {$y$};

		\node at (4.3,0) {$\hphantom{e}$};
	\end{tikzpicture}
	\caption{Cardinality 4 DSA with a signed measure drawn on the dual lattice.}\label{fig:dsa}
\end{figure}

Clearly, the measure pushes forward to 0 on both the $x$ and $y$ axes. Dually, regarding the numerical labels as functions values, they cannot be realized by summing coordinate values. We check
\[
	a+x = 1,\ a+y=-1,\ b+x = -1, \text{ and } b+y = 1 \implies x-y = 2 \text{ and } x-y = -2
\]
An exercise in \cite{ss06} is to prove that 2D DSA are precisely the $90^\circ$ turning points of a rook moving in a closed path on this rectangular chess board. (Proof sketch: every horizontal (vertical) segment of the path gives, by subtraction, a constraint on the horizontal (vertical) labels. The closing of the path places a final linear constraint on both the horizontal and vertical labels, over-determining both. If no rook-path closes, the constraints do not over-determine the edge variables, a closed path over-determining these variables.)

A DSA, $X$, acts as both an obstruction to an embedding being basic (in the sense of the KA theorem) and as a constraint on writing a $k$-large (multi-)set of $\log(\text{eigen/singular values})$ as a set-sum of many smaller (multi-)sets of cardinalities $k_1,\dots,k_r$, $k = \prod_{i=1}^r k_i$. Thus, there must be a bridge connecting hidden tensor decompositions (HTD), which are characterized by set-sum structure of their eigen/singular values, and KA. Proposition \ref{prop:r-dsa} formulates this connection.

According to \cite{ss06}, the problem of characterizing DSA in dimensions $\geq 3$ is open. Proposition \ref{prop:2d-dsa} treats the more tractible case of 2D DSAs, which by the preceding exercise, are the WRCs of Definition \ref{def:wrc}. We prove by induction that any collection of more than $p+q-1$ sites in a 2D $p \times q$ rectangle contains a DSA. Mapping this information back into the problem of section \ref{sec:hidden-structure}, decomposing $M$ into $\U(M_1 \otimes M_2)V^\dagger$, $M_1$ ($M_2$) with $p$ ($q$) singular values, we see that rather than being able to solve for consistent locations of all the $pq$ singular values of $M$, only at most $p+q-1$ locations assigned will admit a general solution (i.e.\ a coordinate labeling), and then only if the locations admit no closed rook circuit.

\begin{definition}\label{def:wrc}
	Let $R$ be the $p \times q$ integral grid of points $\{(i,j) \mid 1 \leq i \leq p, 1 \leq j \leq q\}$. Let $S \subset R$ be a subset. Let $\lbar{R}$ be the rectangle which is the convex hull of $R$. A rook circuit (RC) of $S$ is a closed path in $\lbar{R}$ whose turning points are contained in $S$ and each turn is $\pm 90^\circ$. We say $S$ is \emph{without a rook circuit} (WRC) if $S$ has no RCs. More generally, within a higher dimensional discrete rectangle, a subset $S$ that does not contain any DSA is called a WDSA.
\end{definition}

\begin{proposition}\label{prop:2d-dsa}
	If $S$ is WRC, then $\abs{S} \leq p+q-1$, $\abs{S}$ the cardinality of $S$.
\end{proposition}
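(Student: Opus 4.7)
The plan is to encode the WRC condition as acyclicity of an auxiliary bipartite graph, so that the bound reduces to the classical forest inequality $|E|\le|V|-1$. I would introduce the bipartite graph $H(S)$ with vertex set the disjoint union of $p$ ``row-vertices'' $r_1,\dots,r_p$ and $q$ ``column-vertices'' $c_1,\dots,c_q$, placing one edge from $r_i$ to $c_j$ for each site $(i,j)\in S$, so that $|E(H(S))|=|S|$ and $|V(H(S))|=p+q$. The heart of the proof is the equivalence that $S$ contains a rook circuit if and only if $H(S)$ contains a cycle; once this is in hand, WRC means $H(S)$ is a forest, forests on $p+q$ vertices have at most $p+q-1$ edges, and $|S|\le p+q-1$ follows.

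For the forward direction of the equivalence I would read off a closed walk in $H(S)$ from the turning points of a rook circuit, using the fact that consecutive turns share alternately a column and a row, and then extract a simple cycle. For the converse, a bipartite cycle $r_{i_1}c_{j_1}r_{i_2}c_{j_2}\cdots r_{i_m}c_{j_m}r_{i_1}$ yields the site sequence $(i_1,j_1),(i_2,j_1),(i_2,j_2),(i_3,j_2),\dots,(i_1,j_m)$, and connecting these in order by straight segments in $\lbar{R}$ produces a closed polyline whose turning points lie in $S$ and whose corners alternate horizontal and vertical, i.e.\ a rook circuit in the sense of Definition \ref{def:wrc}.

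The main obstacle I foresee is the routine but essential \emph{non-degeneracy} check in the converse direction: I must rule out segments of length zero, so that each ``turn'' is genuinely $\pm 90^\circ$ and not a spurious $0^\circ$ or $180^\circ$. Since a simple bipartite cycle visits each vertex at most once, consecutive row- (resp.\ column-) indices in the site sequence are forced to differ, which guarantees that every vertical (resp.\ horizontal) segment has strictly positive length. With this verified the forest inequality closes the argument; as a sanity check, the L-shaped configuration consisting of the entire first row together with the entire first column is WRC and saturates the bound at $|S|=p+q-1$, its associated graph $H(S)$ being the double-star tree on $p+q$ vertices.
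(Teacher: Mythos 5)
Your argument is correct, and it proceeds by a genuinely different route from the paper's. The paper runs a double induction---on $|R| = pq$ and then on the discrepancy $k = |S \setminus S_0|$ from the extremal L-shape $S_0$---normalizing $(1,1) \in S$ by a toroidal symmetry, using a local ``hot-foot'' swap (Lemma~\ref{lm:feet}) to push $k$ down, and closing with a counting contradiction inside a dilute sub-rectangle. You instead encode $S$ as the edge set of the bipartite row/column incidence graph $H(S)$, identify rook circuits with cycles, and quote the forest bound $|E| \le |V| - 1$. This is shorter and buys a sharper structural picture: WRC sets are exactly those whose incidence graph is a forest, so the extremal examples are precisely the spanning trees of $K_{p,q}$ (your observation that $S_0$ yields a double star is one instance). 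The two delicate points you would need to spell out are routine and you have already located them. In the converse direction, simplicity of the cycle forces consecutive row- and column-indices in the site sequence to differ, so every segment has positive length and no $0^\circ$ or $180^\circ$ pseudo-turn occurs. In the forward direction, since every segment of a rook circuit has positive length, the closed walk it induces in $H(S)$ never immediately backtracks; hence the edges it traverses form a subgraph in which every vertex has degree at least two, and such a subgraph necessarily contains a cycle. With those details filled in, the proposal is a complete and arguably cleaner proof than the one in the text.
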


\begin{note}
	$S_0 := \{(1,j) \mid 1 \leq j \leq q\} \cup \{(i,1) \mid 1 \leq i \leq p\}$ is an example of a WRC $S$ with $\abs{S} = p+q-1$.
\end{note}

\begin{proof}
	Induct first on $\abs{R} = pq$ and within that use a second induction on $k :=$ the number of points $\abs{S \setminus S_0}$, i.e.\ the ``discrepancy'' from the standard example. The base of these inductions is obvious.

	We may WLOG normalize the situation by assuming $(1,1) \in S$. To check this is legitimate, observe that our problem actually has toroidal $\Z_p \times \Z_q$ symmetry if one focuses only on path turning points and ignores the distinction between +$90^\circ$ and -$90^\circ$ turns. Any rook path on the $\Z_p \times \Z_q$-torus, i.e.\ $R$ with periodic boundary conditions, descends to a rook path in $R$, with the property ``closed'' conserved. The map in the other direction, $R \ra \Z_p \times \Z_q$, is inclusion. Under toroidal symmetry, all points of $R$ become equivalent, so if $S \neq \varnothing$ there is no loss of generality assuming $(1,1) \in S$. Assume, for a contradiction, that $\abs{S} \geq p+q$.

	\begin{definition}
		If $s = (i,j) \in S \setminus S_0$, we say $S$ has ``one hot foot'' if exactly one of $(i,1),(1,j) \in S \cap S_0$. In this case, the pair in $S \cap S_0$ is called ``hot,'' and the other ``cold.'' Since $(1,1) \in S$ if both $(i,1)$ and $(1,j)$ belong to $S$, then $S$ has a RC: $\{(1,1),(1,j),(i,j),(i,1),(1,1)\}$.
	\end{definition}

	\begin{lemma}\label{lm:feet}
		Suppose $S$ is WRC and $s \in S \setminus S_0$ has one hot foot $s^\pr$ and cold foot $s^{\pr\pr}$. Setting $S^\pr = (S \setminus s) \cup \{s^{\pr\pr}\}$, then $S^\pr$ is WRC.
	\end{lemma}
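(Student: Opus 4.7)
My plan is to argue by contradiction: if $S'$ admits a rook circuit $C$, I will exhibit one in $S$, contradicting the hypothesis that $S$ is WRC. Without loss of generality (using the row/column-transposition symmetry) I take $s = (i,j)$ with $i,j \geq 2$, hot foot $s' = (i,1) \in S$, and cold foot $s'' = (1,j) \notin S$. Because $S \setminus S' = \{s\}$, any rook circuit of $S'$ that does not turn at $s''$ would already be a rook circuit of $S$; so $s''$ must be a turning point of $C$.

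At $s'' = (1,j)$, the path $C$ has two orthogonal legs: a vertical one along column $j$ reaching its next turn at $(k,j)$ with $k \geq 2$, and a horizontal one along row $1$ reaching its next turn at $(1,l)$ with $l \neq j$. Since $s = (i,j) \notin S'$ we have $k \neq i$, and both $(k,j)$ and $(1,l)$ lie in $S' \setminus \{s''\} \subset S$. The intuitive move is to splice out the corner at $s''$, replacing the subpath $(k,j) \to (1,j) \to (1,l)$ with the detour $(k,j) \to (i,j) \to (i,1) \to (1,1) \to (1,l)$. The three new turning points $s = (i,j)$, $s' = (i,1)$, and $(1,1)$ all lie in $S$ (the last under the normalization $(1,1) \in S$ already in force in the ambient proof of Proposition \ref{prop:2d-dsa}), each new corner is visibly $\pm 90^\circ$, and the turns at $(k,j)$ and $(1,l)$ are inherited from $C$ with orientations unchanged.

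The hard part is the bookkeeping of degenerate configurations --- for instance $l = 1$ (so $(1,l)$ collapses onto $(1,1)$), or the inserted legs colliding with other turns of $C$, or the modified path failing to be simple. To bypass these cases cleanly I would actually phrase the whole argument dually in terms of signed measures. A rook circuit is equivalently a nonzero $\pm 1$-valued measure on the lattice $R$ whose row- and column-marginals vanish; call this measure $\mu_C$ for the circuit $C$, and let $\mu_\square$ be the analogous four-corner measure on the rectangle $\{(1,j),(i,j),(i,1),(1,1)\}$, with sign chosen so that $\mu_C(1,j) + \mu_\square(1,j) = 0$. Then $\mu := \mu_C + \mu_\square$ has vanishing row- and column-marginals, its support lies inside $S$ (the $(1,j)$-entry now vanishes, the other new support points $(i,j),(i,1),(1,1)$ lie in $S$, and $\supp(\mu_C) \setminus \{(1,j)\} \subset S \setminus \{s\}$), and $\mu \neq 0$ because otherwise $\mu_C = -\mu_\square$ would force $s = (i,j) \in \supp(\mu_C) \subset S'$, a contradiction. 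By the exercise preceding Proposition \ref{prop:2d-dsa}, any nonzero such measure has support containing a rook circuit in $S$, delivering the required contradiction and completing the proof that $S'$ is WRC.
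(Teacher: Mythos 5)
Your proof is correct and takes the same route as the paper: the paper's auxiliary rectangle $\beta$ with corners $(1,1),s',s,s''$ is exactly your four-corner measure $\mu_\square$, and the paper's ``union $\alpha\cup\beta$ (canceling overlap)'' is precisely your sum $\mu=\mu_C+\mu_\square$. The only substantive difference is rigor --- your signed-measure formulation makes clean what the paper waves away with the parenthetical ``(possibly disconnected),'' and your check that $\mu\neq 0$ (because $s=(i,j)\in\supp(\mu_\square)$ while $s\notin S'\supset\supp(\mu_C)$) supplies a detail the paper leaves implicit.
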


	\begin{proof}
		If $S^\pr$ has a RC call it $\alpha$. Clearly, $S^{\pr\pr} := S^\pr \cup S$ has a RC $\beta = \{(1,1),s^\pr,s,s^{\pr\pr},(1,1))$. Note the union $\alpha \cup \beta$ (canceling overlap) would be a (possibly disconnected) RC of $S$, a contradiction.
	\end{proof}

	\begin{figure}[ht]
		\centering
		\begin{tikzpicture}[scale=1.2]
			\draw (-1.5,-1.5) -- (-1.5,1.5) -- (1.5,1.5) -- (1.5,-1.5) -- cycle;
			\draw (-0.5,-1.5) -- (-0.5,1.5);
			\draw (0.5,-1.5) -- (0.5,1.5);
			\draw (-1.5,0.5) -- (1.5,0.5);
			\draw (-1.5,-0.5) -- (1.5,-0.5);
			\node at (0,-2) {$S^\pr$};
			\node at (-1.75,-0.5) {\footnotesize{$s^{\pr\pr}$}};
			\node at (1.75,1.5) {\footnotesize{$s^\pr$}};
		
			\draw[very thick] (-1.5,-1.5) -- (-0.5,-1.5) -- (-0.5,-0.5) -- (-1.5,-0.5) -- cycle;
			\node at (-1,-1) {$\alpha$};
			\draw[fill=black] (-0.5,-1.5) circle (0.4ex);
			\draw[fill=black] (-0.5,-0.5) circle (0.4ex);
			\draw[fill=black] (1.5,1.5) circle (0.4ex);
			\draw[fill=black] (0.5,1.5) circle (0.4ex);
			\draw[fill=black] (-1.5,1.5) circle (0.4ex);
			\draw[fill=black] (-1.5,0.5) circle (0.4ex);
			\draw[fill=black] (-1.5,-0.5) circle (0.4ex);
			\draw[fill=white] (-1.5,-1.5) circle (0.4ex);
		
			\draw (2.5,-1.5) -- (2.5,1.5) -- (5.5,1.5) -- (5.5,-1.5) -- cycle;
			\draw (3.5,-1.5) -- (3.5,1.5);
			\draw (4.5,-1.5) -- (4.5,1.5);
			\draw (2.5,0.5) -- (5.5,0.5);
			\draw (2.5,-0.5) -- (5.5,-0.5);
			\node at (4,-2) {$S^{\pr\pr}$};
			\node at (2.25,-0.5) {\footnotesize{$s^{\pr\pr}$}};
			\node at (5.75,1.5) {\footnotesize{$s^\pr$}};
			\node at (5.75,-0.5) {\footnotesize{$s$}};
		
			\draw[very thick] (2.5,-1.5) -- (3.5,-1.5) -- (3.5,-0.5) -- (5.5,-0.5) -- (5.5,1.5) -- (2.5,1.5) -- (2.5,-1.5);
			\node at (3,-1) {$\alpha$};
			\node at (4,0.2) {$\beta$};
			\draw[fill=black] (3.5,-1.5) circle (0.4ex);
			\draw[fill=black] (3.5,-0.5) circle (0.4ex);
			\draw[fill=black] (5.5,-0.5) circle (0.4ex);
			\draw[fill=black] (5.5,1.5) circle (0.4ex);
			\draw[fill=black] (4.5,1.5) circle (0.4ex);
			\draw[fill=black] (2.5,1.5) circle (0.4ex);
			\draw[fill=black] (2.5,0.5) circle (0.4ex);
			\draw[fill=black] (2.5,-0.5) circle (0.4ex);
			\draw[fill=white] (2.5,-1.5) circle (0.4ex);
		
			\draw (-5.5,-1.5) -- (-5.5,1.5) -- (-2.5,1.5) -- (-2.5,-1.5) -- cycle;
			\draw (-4.5,-1.5) -- (-4.5,1.5);
			\draw (-3.5,-1.5) -- (-3.5,1.5);
			\draw (-5.5,0.5) -- (-2.5,0.5);
			\draw (-5.5,-0.5) -- (-2.5,-0.5);
			\node at (-4,-2) {$S$};
			\node at (-5.75,-0.5) {\footnotesize{$s^{\pr\pr}$}};
			\node at (-2.25,1.5) {\footnotesize{$s^\pr$}};
			\node at (-2.25,-0.5) {\footnotesize{$s$}};
			
			\draw[very thick] (-5.5,-1.5) -- (-4.5,-1.5) -- (-4.5,-0.5) -- (-2.5,-0.5) -- (-2.5,1.5) -- (-5.5,1.5) -- (-5.5,-1.5);
			\draw[fill=black] (-4.5,-1.5) circle (0.4ex);
			\draw[fill=black] (-4.5,-0.5) circle (0.4ex);
			\draw[fill=black] (-2.5,-0.5) circle (0.4ex);
			\draw[fill=black] (-2.5,1.5) circle (0.4ex);
			\draw[fill=black] (-3.5,1.5) circle (0.4ex);
			\draw[fill=black] (-5.5,1.5) circle (0.4ex);
			\draw[fill=black] (-5.5,0.5) circle (0.4ex);
			\draw[fill=white] (-5.5,-1.5) circle (0.4ex);
			\node at (-4.3,0.2) {$\alpha \cup \beta$};
		\end{tikzpicture}
		\caption{Black dots $= S$, white added to $S$ contradicts $S$ WRC.}
	\end{figure}
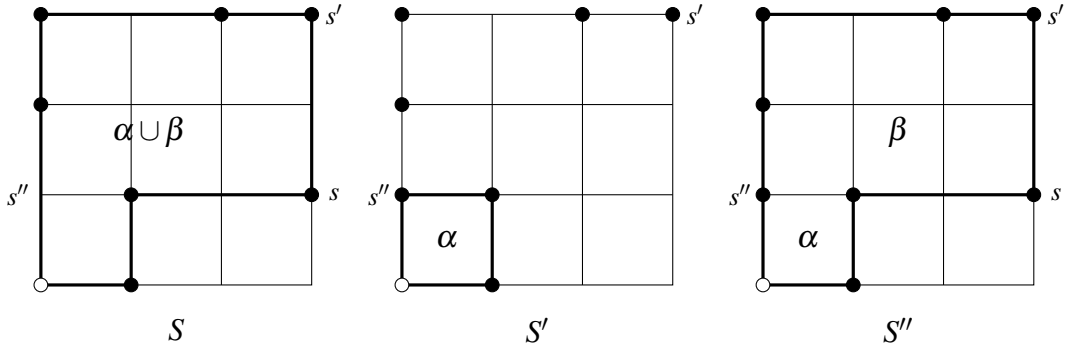

	We conclude for any $S \subset R$, $S$ WRC, and $k(S)$ minimal, that no $s \in S \setminus S_0$ has a hot foot. Consequentially, if $a = \abs{\{1,j) \mid 1 \leq j \leq q\} \setminus S}$ and $b = \abs{\{(i,1) \mid 1 \leq i \leq p\} \setminus S}$, then
	\begin{equation}\label{eq:k-induction}
		k > a + b
	\end{equation}

	Now consider the ``dilute'' sub-$a \times b$-rectangle $AB$ of $R$ made of sites whose vertical (horizontal) projections lie in $\{(1,j) \mid 1 \leq j \leq q\} \setminus S$\ \ ($\{(i,1) \mid 1 \leq i \leq p\} \setminus S$). Since $\abs{AB} < \abs{R} = pq$, the outer induction tells us, in light of Lemma \ref{lm:feet}, that $S \cap (AB) \subset AB$ must have a RC. This contradiction completes the proof.
\end{proof}

According to \cite{ss06}, for $\dim > 2$ (i.e.\ discrete hyper-rectangles $R$), less is known about the geometry or cardinality of a DSA $\subset \Z_{k_1} \times \cdots \times \Z_{k_r}$.

This bridge HTD-KA suggests some weaker structure, a ``partial tensor decomposition,'' in which the operator $M$ is not, itself, found to have a tensor structure, but rather it is embedded in a higher dimensional operator with a tensor structure. Such embeddings are the tensor analog of the universal Kolmogorov-Arnold embedding, $I^n \hookrightarrow I^{2n+1}$.

We have:
\begin{proposition}\label{prop:r-dsa}
	Let $S \subset R = \Z_{k_1} \times \cdots \times \Z_{k_r}$ be an $r$-dimensional WDSA in the (discrete) hyper rectangle $R$, written with periodic boundary conditions. Let $M: W \ra Y$ be a linear map of rank $s = \abs{S}$, $\abs{S}$ the cardinality of $S$, between finite dimensional inner product spaces. Then $M$ may be ``isometrically embedded'' in the higher rank linear operator $\mathcal{O}$ as in the diagram below. All vector spaces here are inner product (or, if one likes, Hermitian inner product) spaces, and the superscripts indicate dimension. ``Isometrically embedded'' is in quotes as it is actually $W \slash \operatorname{ker} M$ and $Y \slash \operatorname{coker} M$ which are isometrically embedded by $\U$ and $V$, respectively. The precise definition is the diagram below:
\end{proposition}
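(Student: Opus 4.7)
The plan rests on the dual linear-algebraic reading of WDSA. By definition, a DSA $D \subset R$ is a minimal support of a nonzero signed measure $\mu$ whose pushforward to each coordinate axis vanishes; equivalently, writing $T : \bigoplus_{j=1}^r \R^{k_j} \ra \R^R$ for the axis-sum map $T(\vec{x})_{(i_1,\dots,i_r)} := \sum_j x^{(j)}_{i_j}$, a DSA is a minimal support of a nonzero element of $(\operatorname{image} T)^\perp$. Hence $S$ being WDSA is equivalent to surjectivity of $\operatorname{res}_S \circ T : \bigoplus_j \R^{k_j} \ra \R^S$: every prescribed real labeling of the sites of $S$ arises as an axis-sum for some choice of coordinate labels.

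First I would take an SVD $M = U_M \Sigma_M V_M^\dagger$ and extract the $s$ positive singular values $\sigma_1,\dots,\sigma_s$ together with orthonormal bases $\{\tld{u}_k\}$, $\{\tld{v}_k\}$ of $W \slash \operatorname{ker} M$ and $Y \slash \operatorname{coker} M$. Choose any bijection $p : \{1,\dots,s\} \ra S$ and, using the surjectivity above, solve
\begin{equation*}
	x^{(1)}_{i_1} + \cdots + x^{(r)}_{i_r} = \log \sigma_{p^{-1}(i_1,\dots,i_r)}, \quad (i_1,\dots,i_r) \in S,
\end{equation*}
for real coordinate labels $x^{(j)}_i$, $1 \leq i \leq k_j$, $1 \leq j \leq r$.

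Second I would assemble the ambient tensor operator. Introduce inner-product spaces $W'_j, Y'_j$ of dimension $k_j$ with chosen orthonormal bases $\{w^{(j)}_i\}, \{y^{(j)}_i\}$, and define diagonal $M_j : W'_j \ra Y'_j$ by $M_j(w^{(j)}_i) := e^{x^{(j)}_i} y^{(j)}_i$. Set $\mathcal{O} := M_1 \otimes \cdots \otimes M_r : W' := \bigotimes_j W'_j \ra Y' := \bigotimes_j Y'_j$. By \eqref{eq:typical-entry}, the singular values of $\mathcal{O}$ are $\exp(x^{(1)}_{i_1} + \cdots + x^{(r)}_{i_r})$ indexed by $(i_1,\dots,i_r) \in R$; at the $S$-sites $p(k)$ these agree with $\sigma_k$ by construction. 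Define the claimed isometries by $\U(\tld{u}_k) := \bigotimes_j w^{(j)}_{p(k)_j}$ and $V(\tld{v}_k) := \bigotimes_j y^{(j)}_{p(k)_j}$, extended linearly; each sends an orthonormal basis to an orthonormal family and so is an isometry. Commutativity $\mathcal{O} \circ \U = V \circ \lbar{M}$ of the embedding diagram (with $\lbar{M}$ the map induced by $M$ on the quotients) is then immediate on basis vectors: $\mathcal{O}\,\U(\tld{u}_k) = \prod_j e^{x^{(j)}_{p(k)_j}}\, V(\tld{v}_k) = \sigma_k\, V(\tld{v}_k) = V\,\lbar{M}(\tld{u}_k)$.

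The main obstacle is the first step: translating the combinatorial WDSA hypothesis into surjectivity of $\operatorname{res}_S \circ T$. Once that duality is in hand, the remainder of the argument is essentially bookkeeping with SVDs and \eqref{eq:typical-entry}. A secondary point to handle cleanly is that the free choice of bijection $p$ (together with that of orthonormal bases on the factor spaces) produces many inequivalent embeddings, none of whose isometric properties depend on the choice, paralleling the non-uniqueness of basic Kolmogorov--Arnold embeddings noted just before the proposition.
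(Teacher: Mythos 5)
Your proposal follows the same route as the paper's proof: biject the $s$ log-singular-values of $M$ onto $S$, solve the axis-sum equations \eqref{eq:tensor-problem}, exponentiate, build each $\mathcal{O}_i$ to have the resulting axis labels as its singular values, and invoke the multiplicativity of singular values under tensor product. Where you go beyond the paper's (quite terse) argument is in making two things explicit that the paper takes for granted. First, the solvability step: the paper simply asserts ``the DSA Property means that the $\log(\text{singular values})$ may be \dots bijected onto $S$ and \eqref{eq:tensor-problem} solved,'' whereas you correctly identify the underlying duality — $S$ WDSA $\iff$ $(\operatorname{image} T)^\perp$ contains no nonzero vector supported on $S$ $\iff$ $\operatorname{res}_S \circ T$ is surjective — which is exactly what guarantees solvability for an \emph{arbitrary} right-hand side, not merely for some favorable assignment. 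Second, you write down the isometries $\U, V$ on SVD bases and verify $\mathcal{O} \circ \U = V \circ \lbar{M}$ directly; the paper leaves the construction of the embedding implicit once the singular values of $\mathcal{O}$ have been matched. Both additions are correct and fill genuine gaps in the exposition rather than altering the substance. Your closing remark about the non-uniqueness coming from the free choice of bijection $p$ and factor bases also accurately mirrors the paper's analogy to the non-uniqueness of basic Kolmogorov--Arnold embeddings.
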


\begin{figure}[ht]
	\centering
	\begin{tikzpicture}
		\node at (0,0) {$A^{k_1} \otimes \cdots \otimes A^{k_r}$};
		\draw[->] (1.5,0) -- (4.5,0);
		\node at (6,0) {$B^{k_1} \otimes \cdots \otimes B^{k_r}$};
		\node at (3,0.2) {\footnotesize{$\mathcal{O}:=\mathcal{O}_1 \otimes \cdots \otimes \mathcal{O}_r$}};
		\node at (0,1.5) {$W \slash \operatorname{ker} M$};
		\draw[->] (1.5,1.5) -- (4.5,1.5);
		\node at (6,1.5) {$Y \slash \operatorname{coker} M$};
		\node at (3,1.7) {\footnotesize{$M$-induced}};
		\draw[->] (0,1.2) -- (0,0.3);
		\node at (0.7,0.75) {\footnotesize{isom. }$\U$};
		\draw[->] (6,1.2) -- (6,0.3);
		\node at (6.7,0.75) {\footnotesize{isom.\ }$V$};
	\end{tikzpicture}
\end{figure}

\begin{proof}
	The DSA Property means that the $s$ $\log(\text{singular values})$ of $M$ may be (arbitrarily) bijected onto the sites of $S$ and the linear equation of \eqref{eq:tensor-problem} with variables at sites of $\Z_{k_1} \coprod \cdots \coprod \Z_{k_r}$, solved. Then $\exp$ of these solutions solve \eqref{eq:typical-entry}. With these solutions in hand, build $\mathcal{O}_i: A^{k_i} \ra B^{k_i}$ to have singular values precisely those occurring as the solutions to \eqref{eq:typical-entry} restricted to the $i$th factor $\Z_{k_i}$. The proposition now follows from the multiplicative behavior of singular values under tensor product.
\end{proof}

Each discrete Sternfeld array enforces some restriction on the eigen/singular values of the large operator, potentially obstructing its tensor factorization. Similarly, in the continuum, Sternfeld arrays obstruct the ability to reconstruct a function from its coordinate projections. In the continuum, functions that factor through a basic embedding are unobstructed in this sense. Proposition \ref{prop:r-dsa} establishes the analogous class of partial tensor decompositions, those unobstructed by DSA. Of course, if $\{\text{singular values }M\}$ has some structure, the equations \eqref{eq:tensor-problem} and \eqref{eq:typical-entry} may still be solvable where $s \gg \abs{\text{WDSA}}$, for any discrete Sternfeld array WDSA.

\end{document}